\newtheorem{theorem}{\protect Theorem}
\newtheorem{lemma}{\protect Lemma}
\newtheorem*{pf}{\protect Proof}
\newtheorem{assume}{\protect Assumption}
\newtheorem{defn}{Definition}
\title{Hierarchical MPC for coupled subsystems using adjustable tubes}
\author[]{Vignesh Raghuraman}
\author[]{Justin P. Koeln}
\affil[]{\textit{University of Texas at Dallas, Richardson, TX, 75080, United States} \\ (e-mail: vignesh.raghuraman@utdallas.edu, justin.koeln@utdallas.edu) \vspace{4mm} \\ \textit{Preprint submitted for Elsevier}}
\date{}                     
\begin{document}
\maketitle

\noindent\rule{16.2cm}{0.4pt}
\\ \textbf{Abstract}
\\ A hierarchical Model Predictive Control (MPC) formulation is presented for coupled discrete-time linear systems with state and input constraints. Compared to a centralized approach, a two-level hierarchical controller, with one controller in the upper-level and one controller per subsystem in the lower-level, can significantly reduce the computational cost associated with MPC. Hierarchical coordination is achieved using adjustable tubes, which are optimized by the upper-level controller and bound permissible lower-level controller deviations from the system trajectories determined by the upper-level controller. The size of these adjustable tubes determines the degree of uncertainty between subsystems and directly affects the required constraint tightening under a tube-based robust MPC framework. Sets are represented as zonotopes to enable the ability to optimize the size of these adjustable tubes and perform the necessary constraint tightening online as part of the MPC optimization problems. State and input constraint satisfaction is proven for the two-level hierarchical controller with an arbitrary number of controllers at the lower-level and a numerical example demonstrates the key features and performance of the approach.
\vspace{8mm}
\\ \textit{Key words:} Hierarchical control, Robust model predictive control, Set-based computing, Zonotopes.
\\ \noindent\rule{16.2cm}{0.4pt}

\section{Introduction}

Model Predictive Control (MPC) of constrained dynamic systems provides the ability to satisfy both input and state constraints to guarantee safe and reliable system operation. This is particularly important for systems where the desired operation requires both transient and steady-state input and state trajectories to approach these constraints. Examples include the control of water distribution networks \cite{Wang2017a}, aircraft power systems \cite{Seok2017a}, smart power grids \cite{Khan2016,Irfan2017}, and hybrid electric vehicles \cite{Enang2017,Wang2020_DSCC}. 
However, centralized MPC approaches are not well-suited for the control of these complex multi-timescale systems, where the system is comprised of many dynamically coupled subsystems and achieving the desired operation requires both fast control update rates and long prediction horizons. 

For these complex systems, hierarchical MPC can be used to decompose control decision across multiple levels of controllers \cite{Scattolini2009}. Typically, upper-level controllers 
are designed with large time step sizes to optimize system operation over long prediction horizons while lower-level controllers use small time step sizes to resolve the fast dynamics of the system over short prediction horizons.
With a single controller per level, \emph{vertical} hierarchical MPC is a computationally efficient approach for controlling multi-timescale systems with a relatively low number of states and inputs \cite{Koeln2019_Aut}. For more complex systems, comprised of multiple dynamically-coupled subsystems, \emph{full} hierarchical MPC utilizes multiple controllers at each of the lower-levels to reduce the number of control decisions per controller 
\cite{Barcelli2010,Farina2017b,Farina2018}.


Many hierarchical MPC formulations \cite{Farina2018,Farina2017b,Barcelli2010,Vermillion2014} have been developed with a focus on either practical application or theoretical guarantees. Specifically, the two-level hierarchical controller 
in \cite{Barcelli2010} with 
an upper-level MPC and a lower-level linear controller 
achieves state and input constraint satisfaction through communication of optimal references and reference rate changes between controller levels and also guarantees closed-loop stability. The controller developed in \cite{Vermillion2014} provides guaranteed persistent controller feasibility and closed-loop stability for cascaded system with actuator dynamics subject to input constraints. 
In this case, 
coordination 
is achieved through the appropriate choice of contractive terminal constraint sets and terminal control laws, which overall guarantee stability of the error between inner-loop and outer-loop reference models to the origin. The works in \cite{Farina2017b,Farina2018} extend the vertical hierarchical architecture to a full two-level hierarchical controller with one upper-level controller and multiple controllers at the lower-level, one for each subsystem operating at the same timescale in \cite{Farina2017b} and different timescales in \cite{Farina2018},
and guarantees closed-loop stability and input constraint satisfaction while driving the system to a desired steady state. Note that \cite{Farina2018,Barcelli2010,Vermillion2014} focus on driving the system to a desired steady-state equilibrium. However, for systems with finite operation, such steady-state equilibrium might not exist, as in the case of systems whose operation is based on the utilization of a finite resource (e.g. battery state of charge in an electric vehicle \cite{Wang2020_DSCC,Sampathnarayanan2014} or fuel in an aircraft \cite{doman2015rapid}).


Similar to \cite{Richards2003,Koeln2019_Aut}, this work focuses on the notion of \emph{completion}, with the goal of maximizing transient performance by satisfying state, input, and terminal constraints during system operation. While the multi-rate hierarchical MPC proposed in \cite{Farina2018} achieves 
real-time computational performance using a \emph{full} hierarchical MPC architecture
with a reduced-order model at the upper-level, guarantees on closed-loop state constraint satisfaction are not shown explicitly.
Additionally, the amount of control flexibility provided to the upper- and lower-level controllers along with the resulting uncertainty sets, robust positive invariant (RPI) sets, and tightened constraint sets are determined offline and might not be the optimal choice for systems that need a time-varying control flexibility. Moreover, guaranteed convergence might not be possible for a wide range of systems due to underlying assumptions on the slow timescale of the upper-level controller.
To address these challenges, this work focuses on development of a set-based hierarchical MPC architecture for linear systems of dynamically-coupled subsystems that guarantees state and input constraint satisfaction. 


One of the fundamental considerations for coordination in hierarchical MPC is how to provide lower-level controllers the flexibility to use their fast update rates and the fast dynamics of the system to improve upon the control decisions made by upper-level controllers without introducing unnecessary conservatism to account for this flexibility. In the authors' prior work \cite{Koeln2019_Aut}, set-based vertical hierarchical MPC was proposed, where \emph{waysets} were used as the primary coordination mechanism to provide both control performance and guaranteed constraint satisfaction. 
Strategically designed terminal costs were added to complement the waysets to guarantee that the lower-level controllers can only improve control performance compared to the upper-level controller trajectories \cite{Raghuraman2020_ACC}.



For full hierarchical MPC of systems of dynamically-coupled subsystems, providing lower-level controllers the flexibility to deviate from the trajectories planned by upper-level controllers introduces uncertainty between subsystems. Therefore, the desired degree of flexibility balances the benefits of allowing lower-level controller to improve control performance within their own subsystem with the cost of creating unknown disturbances for neighboring subsystems. This trade off can be time-varying, where certain system operations might require a high level of coordination between subsystems, resulting in very little flexibility for lower-level subsystem controllers to deviate
from the upper-level system-wide control plan. Alternatively, other system operations might not require much coordination between subsystems and lower-level controllers should be permitted a high degree of flexibility to further improve control performance.   

The proposed two-level hierarchical MPC framework provides this time-varying subsystem coordination flexibility using an adjustable tube set-based coordination mechanism. Specifically, while planning system state and input trajectories, the upper-level controller simultaneously optimizes the permissible deviations from these trajectories provided to the lower-level subsystem controllers and the corresponding constraint tightening needed to be robust to these deviations. These time-varying permissible deviations are communicated to the lower-level controllers that use this flexibility to further optimize subsystem operation. The ability to embed the optimization of these permissible deviation bounds within the upper-level MPC optimization problem is enabled by \emph{zonotopes} 
and the recent work on computing Robust Positive Invariant (RPI) sets and Pontryagin difference set operations using linear constraints \cite{Raghuraman2019,Raghuraman_2021_ACC}. 

The specific contributions of this paper are: (1) the development a two-level hierarchical framework with $M$ lower-level controllers, one for each of the $M$ dynamically-coupled subsystems; (2) the definition and the use of adjustable tubes to provide time-varying bounds on permissible deviations between upper-level and lower-level planned trajectories; (3) the closed-loop analysis of the hierarchical controller to prove controller feasibility and guarantee constraint satisfaction; 
and (4) a numerical demonstration of the capabilities of the proposed approach. Note that the proposed work extends the tube-based robust MPC with uncertainty set optimization from \cite{Raghuraman_2021_ACC} to a hierarchical MPC framework with optimal allocation of uncertainty quantified as the differences in control decisions between controller levels and between subsystems.
Similar to \cite{Raghuraman_2021_ACC}, RPI, tightened output, and tightened terminal sets corresponding to the optimized uncertainty are computed online while solving the control optimization problem.

\section{Notation} \label{Notation}
For a system comprised of multiple subsystems, system-level vectors are denoted in bold, e.g. state $\mathbf{x}$ and input $\mathbf{u}$, while vectors of the $i^{th}$ subsystem have the subscript $ i $, e.g. state $ x_i $ and input $ u_i $. The system state vector is formed by the concatenation of subsystem state vectors as $ \mathbf{x} = [x_i] $. Alternatively, the states of subsystem $ i $ can be extracted from the system state vector as $x_i = \Pi_i \mathbf{x}$. For a discrete-time system, $\mathbf{x}(k)$ denotes the state $\mathbf{x}$ at time step $k$. With $ [k, k+N-1] $ denoting the integers from $ k $ to $ k+N-1 $, the input trajectory over these time steps is denoted $\{\mathbf{u}(j)\}_{j = k}^{k+N-1}$. For MPC, the double index notation $\mathbf{x}(k+l|k)$ denotes the predicted state at future time $k+l$ determined at time step $k$.  
The block-diagonal matrix $ K $ with blocks $K_i$ is denoted $ K = diag(K_i) $. The $p$-norm of a vector is denoted $||\cdot||_p$ and the weighted norm is $||\mathbf{x}||_{\Lambda}^2 = \mathbf{x}^T\Lambda\mathbf{x}$, where $\Lambda$ is a positive-definite diagonal matrix. 
All sets are shown in caligraphic font. For sets $\mathcal{X}$, $\mathcal{Y} \in \mathbb{R}^n$, $\mathcal{X} \oplus \mathcal{Y}$ denotes the Minkowski sum and $\mathcal{X} \ominus \mathcal{Y}$ denotes the Minkowski/Pontryagin difference of $\mathcal{Y}$ from $\mathcal{X}$. 
The Cartesian product of sets is denoted as $ \mathcal{X} \times \mathcal{Y} $. The projection of $\mathcal{X}$ on the $n_i$ dimensions of subsystem $i$ is denoted as $\mathcal{X}_i = \Pi_i\; \mathcal{X}$.

\section{Problem Formulation}
Consider a linear discrete time-invariant system composed of $ M $ dynamically-coupled subsystems, $\mathbf{S}_i$, where $ i \in \mathcal{N} \triangleq [1, M]$. The dynamics of subsystem $\mathbf{S}_i$ are
\begin{subequations} \label{subsys_defn}
\begin{align}
    x_i(k+1) &= A_{ii}x_i(k) + B_{ii}u_i(k) + w_i(k), \label{subsys_dynamics} \\
    y_i(k) &= C_{i}x_i(k) + D_{i}u_i(k), \label{subsys_outputs}
\end{align}
\end{subequations}
where $x_i \in \mathbb{R}^{n_i}$ are the states, $u_i \in \mathbb{R}^{m_i}
$ are the inputs, and $y_i \in \mathbb{R}^{n_i+m_i}$ are the outputs. The coupling between subsystems is captured by the disturbance vector 
\begin{equation}\label{disturbances}
    w_i(k) = \sum_{j \in \mathcal{N}_i} (A_{ij}x_j(k) + B_{ij}u_j(k)),
\end{equation}
where $\mathcal{N}_i$ is the set of neighboring subsystems such that
\begin{equation}
	\mathcal{N}_i \triangleq \{ j \in \mathcal{N} \setminus \{ i \} : [ A_{ij} \; B_{ij} ] \neq 0 \}.
\end{equation}
The outputs are defined to include all states and inputs such that $ y_i(k) \triangleq [x_i(k)^\top \; u_i(k)^\top]^\top $ and $[ C_i \; D_i ] \triangleq I_{n_i+m_i}$. The subsystem states, inputs, and outputs are constrained such that 
\begin{equation} \label{subsystem_cons}
    x_i(k) \in \mathcal{X}_i, \; u_i(k) \in \mathcal{U}_i, \; y_i(k) \in \mathcal{Y}_i \triangleq \mathcal{X}_i \times \mathcal{U}_i.
\end{equation}
Based on \eqref{subsys_defn} and \eqref{disturbances}, the full system dynamics are
\begin{subequations} \label{sys_defn}
\begin{align}
    \mathbf{x}(k+1) &= A\mathbf{x}(k) +B\mathbf{u}(k), \label{sys_dynamics} \\
    \mathbf{y}(k) &= C\mathbf{x}(k) + D\mathbf{u}(k), \label{sys_outputs}
\end{align}
\end{subequations}
where $\mathbf{x} = [x_i] \in \mathbb{R}^n$, $\mathbf{u} = [u_i] \in \mathbb{R}^m$, and $\mathbf{y} = [y_i] \in \mathbb{R}^{n+m}$, such that $ n = \sum_{i=1}^M n_i $ and $ m = \sum_{i=1}^M m_i $. 
The system constraints are
\begin{subequations} \label{sys_cons}
\begin{align}
    \mathbf{x}(k) \in \mathcal{X} &\triangleq \mathcal{X}_1 \times \dots \times \mathcal{X}_M, \\
    \mathbf{u}(k) \in \mathcal{U} &\triangleq \mathcal{U}_1 \times \dots \times \mathcal{U}_M, \\
    \mathbf{y}(k) \in \mathcal{Y} &\triangleq \mathcal{Y}_1 \times \dots \times \mathcal{Y}_M.
\end{align}
\end{subequations}
Let $ A_D \triangleq diag(A_{ii}) $ and  $ B_D \triangleq diag(B_{ii}) $ be block diagonal matrices while $ A_C \triangleq A - A_D $ and $ B_C \triangleq B - B_D $ are off-diagonal matrices that capture the coupling between subsystems.
\begin{assume}\label{assm_stabilizable}
There exists a static feedback control gain $ K_i \in \mathbb{R}^{m_i \times n_i} $ for each subsystem $ \mathbf{S}_i$, $ i \in \mathcal{N} $, such that $ A_{ii} + B_{ii} K_i $ is Schur stable and $ A + B K $ is Schur stable, where $ K = diag(K_i)$ is a block-diagonal matrix.
\end{assume}
\begin{assume}\label{asm_finiteop}
	With a fixed time step $\Delta t$, the system operates for a finite length of time starting from $t = 0$ and ending at $t = t_F = k_F \Delta t$ with time steps indexed by $k \in [0, k_F]$. 
\end{assume}

Starting from an initial condition $\mathbf{x}(0)$, the goal is to plan and execute an input trajectory and corresponding state and output trajectories satisfying the system dynamics from \eqref{sys_defn}, the constraints from \eqref{sys_cons} for all $ k \in [0, k_F-1] $, and the terminal constraint 
\begin{equation}\label{sys_Tcons}
 	\mathbf{x}(k_F) \in \mathcal{T} \triangleq \mathcal{T}_1 \times \dots \times \mathcal{T}_M \subseteq \mathcal{X}.
 \end{equation}
\begin{assume}
	The sets $\mathcal{X}_i$, $\mathcal{U}_i$, and $\mathcal{T}_i $, $i \in \mathcal{N}$, are zonotopes. 
\end{assume}
The generic cost function 
\begin{equation}\label{sys_cost}
	J(\mathbf{x}(0)) = \sum\limits_{j=0}^{k_F-1} \ell(j) + \ell_F(k_F),
\end{equation}
defines the cost of system operation using a pre-determined reference trajectory $ \{ \mathbf{r}(k) \}_{k =0}^{k_F}$ with stage costs $ \ell(j) = \ell( \mathbf{x}(j), \mathbf{u}(j), \mathbf{r}(j)) $ and terminal cost $ \ell_F(k_F) = \ell_F( \mathbf{x}(k_F), \mathbf{r}(k_F)) $. 

Considering the full system \eqref{sys_defn}, operational constraints \eqref{sys_cons}, terminal constraint \eqref{sys_Tcons}, and cost function \eqref{sys_cost}, this paper develops a two-level hierarchical control approach with $ M $ controllers at the lower-level that guarantees constraint satisfaction and provides computational efficiency in the case of a large number of subsystems $ M $, small time step size $\Delta t$, and large operating duration $ t_F $.

\section{Hierarchical Control}
The proposed hierarchical control formulation consists of a single controller $\mathbf{C}_0$ in the upper-level and $ M $ controllers $\mathbf{C}_i, i \in \mathcal{N}$, in the lower-level, where $ \mathbf{C}_i $ controls subsystem $\mathbf{S}_i$.

\begin{assume} \label{updateRates}
	The controller $ \mathbf{C}_0 $ has a time step size $ \Delta t_0 $ and maximum prediction horizon $ \bar{N}_0 $ such that $ \Delta t_0 \bar{N}_0 = t_F $. Each controller $\mathbf{C}_i $, $i \in \mathcal{N} $, has a time step size $ \Delta t $ and maximum prediction horizon $ \bar{N} $ such that $ \Delta t \bar{N} = \Delta t_0 $.
\end{assume}
Let $\nu_0 \triangleq \frac{\Delta t_0}{\Delta t} = \bar{N} \in \mathbb{Z}_{+}$ be defined as a time scaling factor for $\mathbf{C}_0$. The time steps for $\mathbf{C}_0$ are indexed by $k_0,$ with $k_0 \triangleq \frac{k}{\nu_0}$, and let $k_{0,F} \triangleq \frac{k_F}{\nu_0} = \bar{N}_0 $ denote the terminal step of $\textbf{C}_0$ such that $k_0 \in [0, k_{0,F}] $. Thus, the upper-level controller $ \mathbf{C}_0 $ has a \emph{shrinking horizon}, with time-varying horizon length $ N_0(k_0) \triangleq \bar{N}_0 - k_0 $. Each lower-level controller $\mathbf{C}_i$ has a \emph{shrinking and resetting horizon}, with horizon length $ N(k) \triangleq \bar{N} - (k \mod \bar{N})$. This allows $\mathbf{C}_i$ to predict between updates of $\mathbf{C}_0$, at which point ($k \mod \bar{N} = 0$) and the prediction horizon resets back to $N(k) = \bar{N}$.

Similar to \cite{Koeln2019_ACC,Koeln2019_Aut}, $\mathbf{C}_0$ predicts coarse state and input trajectories at time indices $k_0$ with a large time step size $\Delta t_0$. Lower-level controllers $\mathbf{C}_i$ are permitted bounded deviations from the trajectories planned by $\mathbf{C}_0$ to further improve control performance using a smaller time step size $\Delta t $. 
Unlike \cite{Koeln2019_ACC,Koeln2019_Aut}, this work addresses the coupling between subsystems. If the lower-level controller $\mathbf{C}_i $ chooses to deviate from the state and input trajectories planned by $\mathbf{C}_0$, these deviations create unknown disturbances that could lead to constraint violations in neighboring subsystems. Therefore, instead of using waysets as in \cite{Koeln2019_ACC,Koeln2019_Aut}, a tube-based coordination mechanism is used to bound the permissible deviations between the trajectories planned by $\mathbf{C}_0$ and those planned by $\mathbf{C}_i$. Moreover, the size of these permissible deviations is optimized online by $\mathbf{C}_0$ to balance the flexibility provided to lower-level controllers with the potentially time-varying need for close coordination among subsystems.

Specifically, for each subsystem, the sets $\Delta\mathcal{Z}_i(\delta_i^z(k_0))$ and $\Delta\mathcal{V}_i(\delta_i^v(k_0)) $ denote scaled zonotopes that bound the permissible state and input deviations between the trajectories planned by $\mathbf{C}_0$ and those planned by $\mathbf{C}_i$. The scaling vectors can be collected to form the output deviation vector $\delta_i(k_0) = [ \delta_i^z(k_0)^\top \; \delta_i^v(k_0)^\top ]^\top $ and the permissible output deviation set
\begin{equation}
    \Delta\mathcal{Y}_i(\delta_i(k_0)) = \Delta\mathcal{Z}_i(\delta_i^z(k_0)) \times \Delta\mathcal{V}_i(\delta_i^v(k_0)).
\end{equation}
To reduce notational complexity, the shorthand $\Delta\mathcal{Y}_i(k_0) = \Delta\mathcal{Y}_i(\delta_i(k_0)) $ is used when explicitly stating the dependency on $ \delta_i(k_0) $ is unnecessary. The system state, input, and output deviation vectors are $\boldsymbol{\delta}^z(k_0) = [\delta_i^z(k_0)]$, $\boldsymbol{\delta}^v(k_0) = [\delta_i^v(k_0)]$, and $\boldsymbol{\delta}(k_0) = [ \boldsymbol{\delta}^z(k_0)^\top \; \boldsymbol{\delta}_v(k_0)^\top ]^\top$ and the scaled subsystem deviation sets combine to form the scaled system deviation sets
\begin{subequations}\label{sys_perm_dev_sets_all}
\begin{align}
    \Delta\mathcal{Z}(\boldsymbol{\delta}(k_0)) &= \Delta\mathcal{Z}_1(k_0) \times \dots \times \Delta\mathcal{Z}_M(k_0), \label{sys_perm_dev_sets_DeltaZ} \\
    \Delta\mathcal{V}(\boldsymbol{\delta}(k_0)) &= \Delta\mathcal{V}_1(k_0) \times \dots \times \Delta\mathcal{V}_M(k_0), \label{sys_perm_dev_sets_DeltaV} \\
    \Delta\mathcal{Y}(\boldsymbol{\delta}(k_0)) &= \Delta\mathcal{Z}(\boldsymbol{\delta}(k_0)) \times \Delta\mathcal{V}(\boldsymbol{\delta}(k_0)).
    \end{align}
\end{subequations}
The controller $\textbf{C}_0$ updates only when $k = \nu_0 k_0$ (i.e. when $k \mod \nu_0 = 0$), by solving the constrained optimization problem $\mathbf{P}_0(\mathbf{x}(k))$ defined as 
\begin{subequations}
\begin{align}
	& J_0^*\left(\mathbf{x}(k)\right) = \mkern-15mu \min_{\text{\tiny $\begin{matrix} \hat{\mathbf{x}}(k_0|k_0), \hat{\mathbf{U}}(k_0), \\[-2pt] \boldsymbol{\delta}(k_0) \end{matrix} $}} \mkern-2mu	\sum_{j=k_0}^{k_{0,F}-1} \mkern-5mu \ell \left( j|k_0 \right) + \ell_F(k_{0,F}), \label{Up_cost} \\
	&\text{s.t.} \forall j \in \left[k_0,k_{0,F}-1\right],  
	\nonumber \\
	&\hat{\mathbf{x}}(j+1|k_0) = A_0 \hat{\mathbf{x}}(j|k_0) + B_0 \hat{\mathbf{u}}(j|k_0), \label{Up_model} \\
	& \hat{\mathbf{y}}(j|k_0) = C \hat{\mathbf{x}}(j|k_0) + D \hat{\mathbf{u}}(j|k_0) \in \hat{\mathcal{Y}}_0(\boldsymbol{\delta}(k_0)), \label{Up_outputs} \\
	&\hat{\mathbf{x}}(k_{0,F}|k_0) \in 
	\hat{\mathcal{T}}_0(\boldsymbol{\delta}(k_0)), \label{Up_terminal} \\
	& \mathbf{x}(k) - \hat{\mathbf{x}}(k_0|k_0) \in \Delta\mathcal{Z}(\boldsymbol{\delta}(k_0)) \oplus \mathcal{E}_0(\boldsymbol{\delta}(k_0)), \label{Up_IC} \\
	& \Delta\mathcal{Z}(\boldsymbol{\delta}(k_0)) \subseteq \text{Pre}(\Delta\mathcal{Z}(\boldsymbol{\delta}(k_0))). \label{Up_Pre}
\end{align} \label{Up_MPC}%
\end{subequations}
The shrinking horizon of  $\mathbf{P}_0(\mathbf{x}(k))$ is reflected in the summation limits in \eqref{Up_cost}. The stage costs are defined as $\ell(j|k_0) = \ell( \mathbf{x}(k), \hat{\mathbf{x}}(j|k_0), \hat{\mathbf{u}}(j|k_0), \boldsymbol{\delta}(k_0), \mathbf{r}_0(j))$ to be a function of the measured state, nominal state, nominal input, permissible deviations for lower-level controllers, and the reference trajectory. The terminal cost $\ell_F(k_F) $ is the same as in \eqref{sys_cost}. Note that the system performance can be balanced with the maximization of $\boldsymbol{\delta}$ through the addition of the term $\Lambda ||\bar{\boldsymbol{\delta}} - \boldsymbol{\delta}||_p$, where $\Lambda$ is a scalar weighting term and $\bar{\boldsymbol{\delta}}$ is a user-specified upper-bound on $\boldsymbol{\delta}$. The nominal input trajectory is defined as $\hat{\mathbf{U}}(k_0) = \{ \hat{\mathbf{u}}(j|k_0)\}_{j = k_0}^{k_{0,F} - 1}$. The permissible output deviation scaling vector $ \boldsymbol{\delta}(k_0) $ affects the sizes of the tightened output constraint set $ \hat{\mathcal{Y}}_0(\boldsymbol{\delta}(k_0)) $, the tightened terminal constraint set $ \hat{\mathcal{T}}_0(\boldsymbol{\delta}(k_0)) $, the state deviation constraint set $ \Delta\mathcal{Z}(\boldsymbol{\delta}(k_0)) $, and the RPI set $ \mathcal{E}_0(\boldsymbol{\delta}(k_0)) $, which are time-varying. 
In \eqref{Up_model}, the model used by $\mathbf{C}_0$ assumes a piecewise constant control input over the time step size $\Delta t_0$ and thus $A_0 = A^{\nu_0}$ and $B_0 = \sum_{j=0}^{\nu_0 -1} A^{j}B$ (as in \cite{Scattolini2007}). In \eqref{Up_outputs} and \eqref{Up_terminal}, the outputs and terminal state are constrained to the time-varying
tightened output and terminal constraint sets. Similar to tube-based MPC \cite{Mayne2005}, \eqref{Up_IC} allows $\mathbf{C}_0$ flexibility in the choice of initial condition $ \hat{\mathbf{x}}(k_0|k_0) $, which is used to prove recursive feasibility of $\mathbf{P}_0(\mathbf{x}(k))$ in Section \ref{sec_hier_control_feas}. Finally, \eqref{Up_Pre} constrains the time-varying permissible state deviation set to be a subset of its own precursor set. Based on the definition from \cite{Borrelli2011}, the precursor set is defined specifically as
\begin{equation} \label{precursor}
    \text{Pre}(\Delta\mathcal{Z}(k_0)) = \left\{\mathbf{z} \mid 
    \exists \, \mathbf{v} \in \Delta\mathcal{V}(k_0) \text{ s.t. }  \\ A_D \mathbf{z} + B_D \mathbf{v} \in \Delta\mathcal{Z}(k_0) \right\}, 
\end{equation}
and is used to establish feasibility of the lower-level controllers in Section \ref{sec_hier_control_feas}. The details of how to formulate the sets and set containment conditions used in \eqref{Up_MPC} 
as linear constraints are provided in the \textbf{Appendix}. The reference trajectory $\mathbf{r}_0(j)$ can be obtained by downsampling the predetermined reference trajectory $\mathbf{r}(j)$ either using averaging or zero order hold \cite{Koeln2019_ACC}.
Note that the RPI set $ \mathcal{E}_0(\boldsymbol{\delta} (k_0)) $ is assumed to be a \emph{structured} RPI set such that
\begin{equation}\label{Up_struct_RPI}
    \mathcal{E}_0(\boldsymbol{\delta}(k_0)) = \mathcal{E}_1(\delta_1(k_0)) \times \dots \times \mathcal{E}_M(\delta_M(k_0)), 
\end{equation}
and is formulated in more detail in Section \ref{sec_nom_traj_error_propagation}. 

The lower-level controllers $\mathbf{C}_i $, $ i \in \mathcal{N}$, update at each time index $ k $ by each solving, in parallel, the constrained optimization problems $\textbf{P}_i(x_i(k))$, defined as 
\begin{subequations}
	\begin{align}
	& J_i^*\left(x_i(k)\right) = \mkern-5mu \min_{\text{\tiny $ \begin{matrix} z_i(k|k), \\[0pt] V_i(k) \end{matrix} $}} \mkern-20mu \sum_{j=k}^{k+N(k)-1} \mkern-20mu \ell_i(j|k) + \ell_{i,F}(k+N(k)), \label{Low_cost} \\
	&\text{s.t.} \, \forall j \in \left[k,k+N(k)-1\right], \nonumber \\
	& z_i(j+1|k) = A_{ii}z_i(j|k) + B_{ii}v_i(j|k) + \hat{w}_i^*(j), \label{Low_model} \\
	& y_i(j|k) = C_i z_i(j|k) + D_i v_i(j|k), \label{Low_outputs} \\
    & y_i(j|k) - \hat{y}_i^*(j) \in \Delta\mathcal{Y}_i(\delta_i^*(k_0)), \label{Low_DeltaY_deviation} \\
    & z_i(k+N(k)|k) - \hat{x}_i^*(k+N(k)) \in \Delta\mathcal{Z}_i(\delta_i^*(k_0)), \label{Low_terminal} \\
    & x_i(k) - z_i(k|k) \in \mathcal{E}_i(\delta_i^*(k_0)). \label{Low_IC}
	\end{align}
	\label{Low_MPC}%
\end{subequations}
The shrinking and resetting horizon of  $\mathbf{P}_i(x_i(k))$ is reflected in the summation limits in \eqref{Low_cost}. The stage costs are defined as $ \ell_i(j|k) = \ell_i \left(x_i(k),z_i(j|k), v_i(j|k), r_i(j) \right) $ to be a function of the measured subsystem state, nominal subsystem state, nominal subsystem input, and subsystem reference trajectory. The terminal cost is defined as $ \ell_{i,F}(k+N(k)) = \ell_{i,F}( z_i(k+N(k)|k), r_i(k+N(k))) $. The nominal input trajectory is defined as $V_i(k) = \{ v_i(j|k)\}_{j = k}^{k +N(k) - 1}$. In \eqref{Low_model}, the subsystem dynamics from \eqref{subsys_dynamics} are used with a time-varying $\mathbf{C}_0$-optimal disturbance $\hat{w}_i^*(j)$ that is communicated from $\textbf{C}_0$ (details in Section \ref{sec_nom_traj_error_propagation}). Nominal subsystem outputs are defined in \eqref{Low_outputs} and the differences between these outputs and the $\mathbf{C}_0$-optimal outputs $\hat{y}_i^*(j)$ are constrained in \eqref{Low_DeltaY_deviation} to the time-varying permissible output deviation set $ \Delta\mathcal{Y}_i(\delta_i^*(k_0))$ (details in Section \ref{sec_nom_traj_error_propagation}). Similarly, the difference between the nominal terminal state and the $\mathbf{C}_0$-optimal terminal state is constrained to the time-varying permissible state deviation set $ \Delta\mathcal{Z}_i(\delta_i^*(k_0))$ in \eqref{Low_terminal}. Finally, \eqref{Low_IC} provides flexibility in initial condition $z_i(k|k)$ based on the RPI set computed by $\mathbf{C}_0$.

\begin{figure}[t]
    \centering
    \includegraphics[width=100mm]{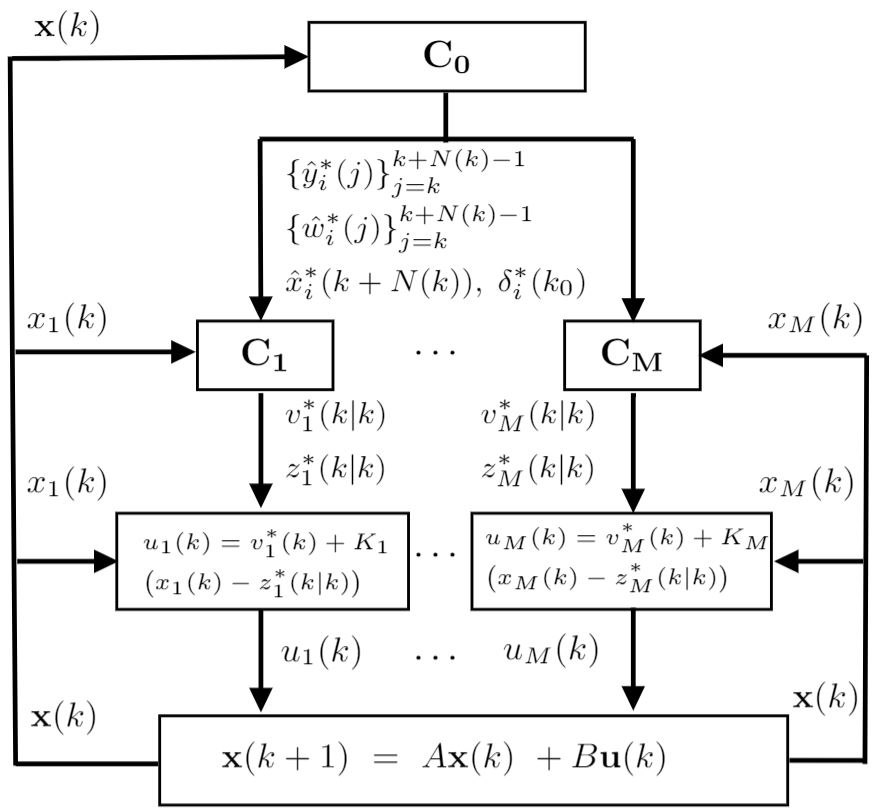}
\caption{Two-level hierarchical MPC where $\mathbf{C}_0$ is formulated based on \eqref{Up_MPC} and $\mathbf{C}_i$, $i \in \mathcal{N}$, based on \eqref{Low_MPC}. The $\mathbf{C}_0$-optimal trajectories $\hat{y}_i^*(j)$ and $\hat{w}_i^*(j)$ are computed using \eqref{upsamp_trajs} and \eqref{nominal_trajs}. The optimal output deviations $\delta_i^*(k_0)$ are used to coordinate controllers $\mathbf{C}_0$ and $\mathbf{C}_i$, $i \in \mathcal{N}$, and the static feedback control law \eqref{Control_Law} computes the inputs to each subsystem $\mathbf{S}_i$.}
\label{Fig_Hierarchy_BlkDiagram}
\end{figure}

As shown in Fig. \ref{Fig_Hierarchy_BlkDiagram}, coordination between the upper-level controller $\mathbf{C}_0$ and lower-level controllers $ \mathbf{C}_i $, $ i \in \mathcal{N} $, is achieved
through the communication of the $\mathbf{C}_0$-optimal trajectories $\hat{y}_i^*(j)$ and $\hat{w}_i^*(j)$, $j \in [k, k + N(k) -1]$, terminal state $\hat{x}_i^*(k+N(k))$, and the time-varying permissible deviation vectors $ \delta_i^*(k_0) $. In this hierarchical control architecture, only the lower-level controllers $ \mathbf{C}_i $ directly affect the system through inputs to the subsystems $\mathbf{S}_i$. Once each $ \mathbf{C}_i $
has solved for the optimal nominal input trajectories $ V_i^*(k) $ and optimal nominal initial condition $ z_i^*(k|k) $, the input to the system is $ \mathbf{u}(k) = [u_i(k)] $ where 
\begin{equation}\label{Control_Law}
	u_i(k) = v_i^*(k|k) + K_i(x_i(k) - z_i^*(k|k)).
\end{equation}
The two-level hierarchical controller is implemented based on \textbf{Algorithm 1}. The specific formulation of the sets in \eqref{Up_MPC} and \eqref{Low_MPC} are presented in Section \ref{sec_nom_traj_error_propagation} and the corresponding constraints are used to guarantee satisfaction of the state, input, and terminal constraints from \eqref{sys_cons} and \eqref{sys_Tcons} in Section \ref{sec_hier_control_feas}.

\IncMargin{1.5em}
\begin{algorithm2e}
	\SetAlgoLined
    \BlankLine
	Initialize $k$, $k_0 \leftarrow 0$ \\
	\If{$k \mod \nu_0 = 0$}{solve $\mathbf{P}_0(\mathbf{x}(k))$\;
			communicate $\{\hat{y}_i^*(j)\}_{j=k}^{k+N(k)-1}$, $\{\hat{w}_i^*(j)\}_{j=k}^{k+N(k)-1}$, $\hat{x}_i^*(k+N(k))$, and $\delta_i^*(k_0)$
			to $\mathbf{P}_i(x_i(k)), \forall i \in \mathcal{N} $\;
			$k_0 \leftarrow k_0 +1 $\;}
		solve $\mathbf{P}_i(x_i(k)), \forall i \in \mathcal{N}$, and apply the input $\mathbf{u}(k) = [u_i(k)]$ to the system based on \eqref{Control_Law}\; 
		$k \leftarrow k + 1$\;
	\caption{Two-level Hierarchical MPC with subsystem coupling.}
	\label{Hier_Algorithm}
\end{algorithm2e}
\DecMargin{1.5em}

\section{Nominal Trajectories and Error Propagation} \label{sec_nom_traj_error_propagation}

Following the tube-based MPC framework in \cite{Mayne2005}, the goal of this section is to explicitly bound the differences between the nominal state and input trajectories planned by the controllers $\mathbf{C}_0$ and $ \mathbf{C}_i $, $ i \in \mathcal{N} $, and the true system trajectories. 

First, since $ \mathbf{C}_0 $ has a larger time step size than $ \mathbf{C}_i$ and system dynamics (i.e. $ \Delta t_0 > \Delta t $), the input and state trajectories determined by $ \mathbf{C}_0 $ must be upsampled. Let $ \hat{\mathbf{u}}^*(k) $ and $ \hat{\mathbf{x}}^*(k) $ be the upsampled input and state trajectories corresponding to the optimal trajectories determined by $ \mathbf{C}_0 $.  Since the model \eqref{Up_model} assumed a piecewise constant input, the upsampled trajectories are computed as the forward simulation of \eqref{sys_dynamics} such that
\begin{subequations}\label{upsamp_trajs}
\begin{align}
	\hat{\mathbf{u}}^*(k) &= \hat{\mathbf{u}}^*(k_0|k_0),\\
	\hat{\mathbf{x}}^*(k) &= A^{k-\nu_0 k_0} \hat{\mathbf{x}}^*(k_0|k_0) + \mkern-20mu \sum_{j=0}^{k-\nu_0 k_0 -1} \mkern-20mu A^{j}B \hat{\mathbf{u}}^*(k_0|k_0),
\end{align}
\end{subequations}
for $ k \in [\nu_0 k_0, \nu_0 (k_0 + 1) - 1] $. These trajectories create the $\mathbf{C}_0$-optimal output and disturbance trajectories $\hat{y}_i^*(k)$ and $\hat{w}_i^*(k)$ used in \eqref{Low_DeltaY_deviation} and \eqref{Low_model}, where $\hat{u}_i^*(k) = \Pi_i \; \hat{\mathbf{u}}^*(k)$, $\hat{x}_i^*(k) = \Pi_i \; \hat{\mathbf{x}}^*(k)$, and
\begin{subequations} \label{nominal_trajs}
\begin{align}
    \hat{y}_i^*(k) &= [ \hat{x}_i^*(k)^\top \;\hat{u}_i^*(k)^\top ]^\top, \label{nominal_output} \\
    \hat{w}_i^*(k) &= \sum_{j \in \mathcal{N}_i} (A_{ij}\hat{x}_j^*(k) + B_{ij}\hat{u}_j^*(k)). \label{nominal_Distrubance}
\end{align}
\end{subequations}
Having defined the upsampled nominal trajectories for $ \mathbf{C}_0 $, let $ \Delta \mathbf{x}(k) = [\Delta x_i(k)] $, $ \Delta \mathbf{u}(k) = [\Delta u_i(k)] $, and $ \Delta \mathbf{y}(k) = [\Delta y_i(k)] $ denote the state, input, and output prediction errors for $ \mathbf{C}_0 $, where
\begin{subequations}\label{Up_Prediction_Error_Defn}
\begin{align}
    \Delta x_i(k) &\triangleq x_i(k) - \hat{x}_i^*(k), \\
    \Delta u_i(k) &\triangleq u_i(k) - \hat{u}_i^*(k), \\
    \Delta y_i(k) &\triangleq y_i(k) - \hat{y}_i^*(k) = [\Delta x_i(k)^\top \, \Delta u_i(k)^\top]^\top. 
\end{align}
\end{subequations} 
As shown in Fig. \ref{Fig_Hierarchy_errors}, these upper-level prediction errors consist of two parts, corresponding to the planned deviations by lower-level controllers $ \mathbf{C}_i $ and the resulting lower-level prediction errors due to the coupling between subsystems. Specifically,
\begin{subequations} \label{Up_Prediction_Error}
\begin{align}
	\Delta x_i(k) &= \Delta z_i(k) + e_i(k),\\
	\Delta u_i(k) &= \Delta v_i(k) + K_i e_i(k),
\end{align}
\end{subequations}
where 
\begin{equation}\label{subsys_deviations}
    \Delta z_i(k) \triangleq z_i(k) - \hat{x}_i^*(k), \; \Delta v_i(k) \triangleq v_i(k) - \hat{u}_i^*(k), 
\end{equation}
are the planned deviations and 
\begin{equation*}
    e_i(k) \triangleq x_i(k) - z_i(k), 
\end{equation*}
are lower-level prediction errors due to the coupling between subsystems. Note that $ K_i e_i(k) = u_i(k) - v_i(k) $ based on the control law from \eqref{Control_Law}. With the planned deviations bounded as $ \Delta z_i(k) \in \Delta \mathcal{Z}_i(k_0) $ and $ \Delta v_i(k) \in \Delta \mathcal{V}_i(k_0)$, which are simultaneously imposed as bounded output deviations in \eqref{Low_DeltaY_deviation}, the following two lemmas establish prediction error bounds for the lower- and upper-level controllers.

\begin{figure}
    \centering
    \includegraphics[width=100mm]{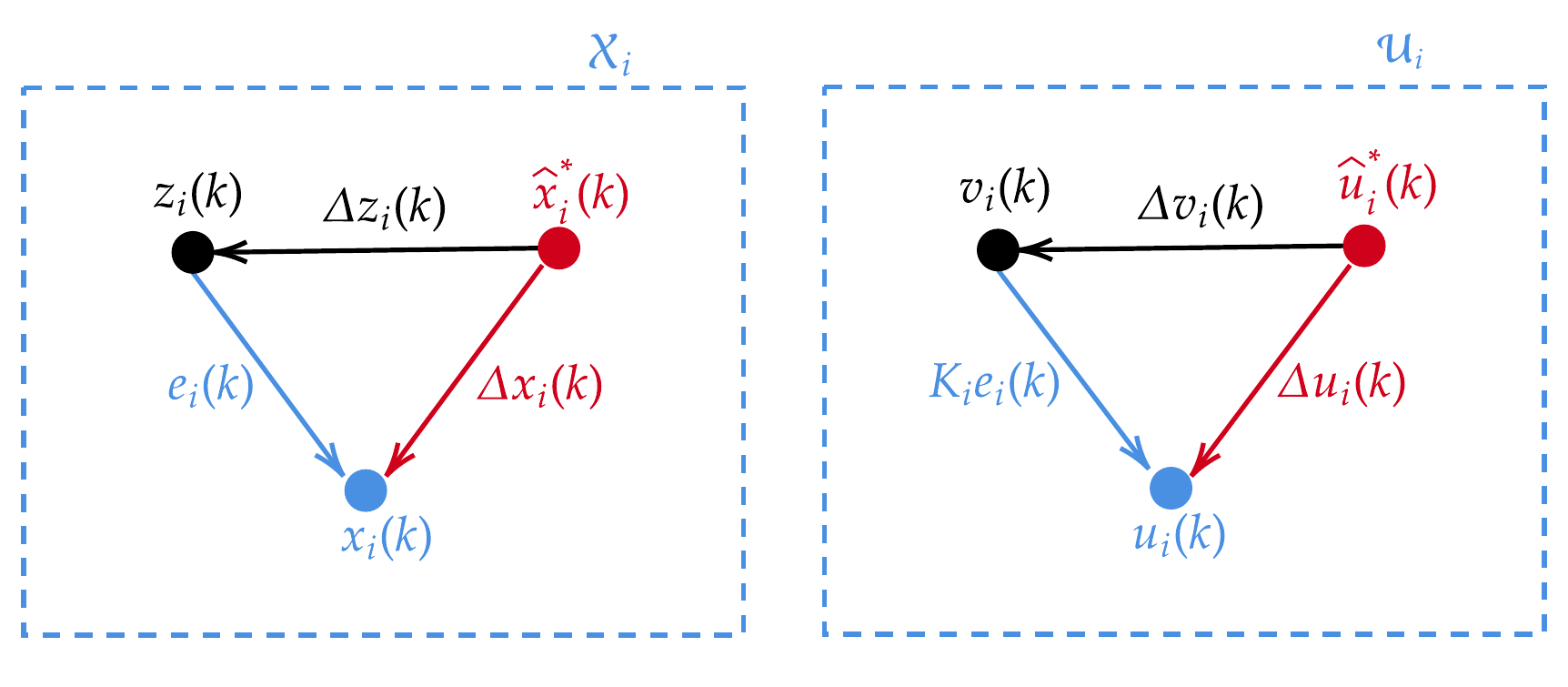}
    \caption{Schematic showing the true states and inputs for the system compared to those planned by controllers $\mathbf{C}_0$ and $\mathbf{C}_i$, $i \in \mathcal{N}$, and the corresponding planned and unplanned prediction errors.}
    \label{Fig_Hierarchy_errors}
\end{figure}

\begin{lemma} \label{lemma_Low_error_bounds}
    Let the disturbance error set be defined as
    \begin{equation}\label{Up_dist_set}
        \Delta\mathcal{W} = A_C\Delta \mathcal{Z} \oplus B_C \Delta \mathcal{V}.
    \end{equation}
    Then the lower-level prediction errors $ \mathbf{e}(k) = [e_i(k)] $ are bounded to the RPI set $ \mathcal{E}_0 \subset \mathbb{R}^n $, where $ \mathcal{E}_0 $ satisfies
    \begin{equation} \label{Up_RPI}
        (A + BK)\mathcal{E}_0 \oplus \Delta\mathcal{W} \subseteq \mathcal{E}_0. 
    \end{equation}
\end{lemma}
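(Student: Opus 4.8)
The plan is to derive the closed-loop recursion governing the lower-level prediction errors $\mathbf{e}(k) = [e_i(k)]$ and then show that any $\mathcal{E}_0$ satisfying \eqref{Up_RPI} is robustly positive invariant for it. First I would compute $e_i(k+1) = x_i(k+1) - z_i(k+1)$ for each subsystem, using the candidate nominal successor $z_i(k+1) = z_i(k+1|k)$ generated by \eqref{Low_model}. Substituting the true dynamics \eqref{subsys_dynamics}, the control law \eqref{Control_Law} (which gives $u_i(k) - v_i(k|k) = K_i e_i(k)$), and subtracting \eqref{Low_model} evaluated at $j=k$ yields $e_i(k+1) = (A_{ii}+B_{ii}K_i)e_i(k) + \big(w_i(k)-\hat{w}_i^*(k)\big)$. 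The disturbance mismatch expands, via \eqref{disturbances}, \eqref{nominal_Distrubance}, and the error definitions \eqref{Up_Prediction_Error_Defn}, to $w_i(k)-\hat{w}_i^*(k) = \sum_{j\in\mathcal{N}_i}\big(A_{ij}\Delta x_j(k) + B_{ij}\Delta u_j(k)\big)$.

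Next I would insert the decomposition \eqref{Up_Prediction_Error}, writing $\Delta x_j(k) = \Delta z_j(k) + e_j(k)$ and $\Delta u_j(k) = \Delta v_j(k) + K_j e_j(k)$, so that $w_i(k)-\hat{w}_i^*(k) = \sum_{j\in\mathcal{N}_i}\big(A_{ij}\Delta z_j(k) + B_{ij}\Delta v_j(k)\big) + \sum_{j\in\mathcal{N}_i}(A_{ij}+B_{ij}K_j)e_j(k)$. Stacking the subsystem recursions into full-system form and using $A_{ij}=0$, $B_{ij}=0$ for $j\notin\mathcal{N}_i\cup\{i\}$ together with $A_C = A-A_D$, $B_C = B-B_D$, the block-diagonal term $(A_D+B_D K)\mathbf{e}(k)$ merges with the off-diagonal feedback term $(A_C+B_C K)\mathbf{e}(k)$ to produce exactly $\mathbf{e}(k+1) = (A+BK)\mathbf{e}(k) + A_C\Delta\mathbf{z}(k) + B_C\Delta\mathbf{v}(k)$.

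To finish, I would bound the additive term. The output-deviation constraint \eqref{Low_DeltaY_deviation} at the applied step $j=k$, combined with \eqref{subsys_deviations} and the Cartesian-product structure of $\Delta\mathcal{Y}_i$, forces $\Delta z_i(k)\in\Delta\mathcal{Z}_i(k_0)$ and $\Delta v_i(k)\in\Delta\mathcal{V}_i(k_0)$ for every $i$, hence $\Delta\mathbf{z}(k)\in\Delta\mathcal{Z}$ and $\Delta\mathbf{v}(k)\in\Delta\mathcal{V}$ by \eqref{sys_perm_dev_sets_DeltaZ} and \eqref{sys_perm_dev_sets_DeltaV}. Consequently $A_C\Delta\mathbf{z}(k) + B_C\Delta\mathbf{v}(k)\in\Delta\mathcal{W}$ with $\Delta\mathcal{W}$ as in \eqref{Up_dist_set}. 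With the base case $\mathbf{e}(0)\in\mathcal{E}_0$ obtained from feasibility of \eqref{Up_IC} and \eqref{Low_IC} — which admit the decomposition $\mathbf{x}(0)-\hat{\mathbf{x}}^*(0) = \Delta\mathbf{z}(0) + \mathbf{e}(0)$ with $\mathbf{e}(0)\in\mathcal{E}_0$ — induction then gives $\mathbf{e}(k)\in\mathcal{E}_0$ for all $k$: if $\mathbf{e}(k)\in\mathcal{E}_0$ then $\mathbf{e}(k+1)\in(A+BK)\mathcal{E}_0\oplus\Delta\mathcal{W}\subseteq\mathcal{E}_0$ by \eqref{Up_RPI}, and a nonempty such $\mathcal{E}_0$ exists because $A+BK$ is Schur by Assumption \ref{assm_stabilizable}.

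I expect the main obstacle to be bookkeeping rather than conceptual: keeping the planned deviations $\Delta z_i,\Delta v_i$ (bounded by the optimized, time-varying sets) cleanly separated from the coupling-induced errors $e_i$ inside $w_i-\hat{w}_i^*$, and verifying that the block-diagonal and off-diagonal feedback contributions reassemble precisely into $A+BK$. A secondary subtlety is the transition across upper-level updates, where $\mathcal{E}_0(\boldsymbol{\delta}(k_0))$ changes; this is reconciled by \eqref{Up_IC} re-establishing membership whenever $\mathbf{C}_0$ re-solves, and is therefore deferred to the feasibility analysis in Section \ref{sec_hier_control_feas}.
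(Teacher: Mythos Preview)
Your proposal is correct and follows essentially the same route as the paper: derive the subsystem error recursion $e_i(k+1)=(A_{ii}+B_{ii}K_i)e_i(k)+\Delta w_i(k)$, expand $\Delta w_i(k)$ via \eqref{Up_Prediction_Error} to expose the $(A_{ij}+B_{ij}K_j)e_j$ cross terms and the planned deviations, stack to obtain $\mathbf{e}(k+1)=(A+BK)\mathbf{e}(k)+A_C\Delta\mathbf{z}(k)+B_C\Delta\mathbf{v}(k)$, and conclude invariance from \eqref{Up_RPI}. Your added remarks on the base case and on the transition at $\mathbf{C}_0$ updates go slightly beyond what the paper records in this lemma, but are consistent with how the surrounding feasibility analysis handles those points.
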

\begin{pf}
    Using the true subsystem dynamics from \eqref{subsys_dynamics} and the nominal subsystem model from \eqref{Low_model}, the lower-level prediction error dynamics for each subsystem are 
    \begin{equation} \label{subsystem_error_dynamics}
        e_i(k+1) = (A_{ii} + B_{ii} K_i) e_i(k) + \Delta w_i(k),
    \end{equation}
    where $\Delta w_i(k) = w_i(k) - \hat{w}_i^*(k)$. Using the definitions of $  w_i(k) $ and $ \hat{w}_i^*(k)$ from \eqref{disturbances} and \eqref{nominal_Distrubance}, this disturbance error is
    \begin{equation*}
        \Delta w_i(k) = \sum_{j \in \mathcal{N}_i} (A_{ij}(x_j(k) - \hat{x}_j^*(k)) + B_{ij}(u_j(k) - \hat{u}_j^*(k))).
    \end{equation*}
    Based on \eqref{Up_Prediction_Error_Defn} and \eqref{Up_Prediction_Error}, this disturbance error can be rewritten as
    \begin{equation} \label{distubance_error}
	    \Delta w_i(k) = \sum_{j \in \mathcal{N}_i} \begin{bmatrix} (A_{ij} + B_{ij} K_j) e_j(k) + \\ A_{ij} \Delta z_j(k) + B_{ij} \Delta v_j(k) \end{bmatrix}.
    \end{equation}
    Combining \eqref{subsystem_error_dynamics} and \eqref{distubance_error} for all subsystems $ i \in \mathcal{N} $ results in the system error dynamics
    \begin{equation*}
        \mathbf{e}(k+1) = (A + BK) \mathbf{e}(k) + A_C \Delta \mathbf{z}(k) + B_C \Delta \mathbf{v}(k).
    \end{equation*}
    Since $ \Delta z_i(k) \in \Delta \mathcal{Z}_i $ and $ \Delta v_i(k) \in \Delta \mathcal{V}_i, $ $ \forall i \in \mathcal{N} $, $ A_C \Delta \mathbf{z}(k) + B_C \Delta \mathbf{v}(k) \in \Delta\mathcal{W}$, as defined in \eqref{Up_dist_set}. Thus, if $ \mathbf{e}(k) \in \mathcal{E}_0 $ and $ \mathcal{E}_0 $ satisfies \eqref{Up_RPI}, then $ \mathbf{e}(k+1) \in \mathcal{E}_0 $.
    \hfill \hfill \qed
\end{pf}

\begin{lemma} \label{lemma_pred_error0}
    The upper-level prediction errors $ \Delta \mathbf{x}(k) = [\Delta x_i(k)] $ and $ \Delta \mathbf{u}(k) = [\Delta u_i(k)] $ are bounded such that 
    \begin{equation}
        \Delta \mathbf{x}(k) \in \Delta \mathcal{Z} \oplus \mathcal{E}_0, \quad \Delta \mathbf{u}(k) \in \Delta \mathcal{V} \oplus K \mathcal{E}_0.
    \end{equation}
\end{lemma}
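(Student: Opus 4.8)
The plan is to derive the result directly from the error decomposition \eqref{Up_Prediction_Error}, stacked over all subsystems, together with Lemma \ref{lemma_Low_error_bounds} and the lower-level deviation constraint \eqref{Low_DeltaY_deviation}. First I would collect the per-subsystem identities $\Delta x_i(k) = \Delta z_i(k) + e_i(k)$ and $\Delta u_i(k) = \Delta v_i(k) + K_i e_i(k)$ into system-level form $\Delta\mathbf{x}(k) = \Delta\mathbf{z}(k) + \mathbf{e}(k)$ and $\Delta\mathbf{u}(k) = \Delta\mathbf{v}(k) + K\mathbf{e}(k)$, using $K = \mathrm{diag}(K_i)$ and the concatenations $\Delta\mathbf{z}(k) = [\Delta z_i(k)]$, $\Delta\mathbf{v}(k) = [\Delta v_i(k)]$, $\mathbf{e}(k) = [e_i(k)]$.

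Next I would observe that the planned deviations of the lower-level controllers are confined to the permissible deviation sets: the constraint \eqref{Low_DeltaY_deviation} forces $\Delta y_i(k) = y_i(k) - \hat{y}_i^*(k) \in \Delta\mathcal{Y}_i(\delta_i^*(k_0)) = \Delta\mathcal{Z}_i(\delta_i^*(k_0)) \times \Delta\mathcal{V}_i(\delta_i^*(k_0))$, hence $\Delta z_i(k) \in \Delta\mathcal{Z}_i(k_0)$ and $\Delta v_i(k) \in \Delta\mathcal{V}_i(k_0)$ for every $i \in \mathcal{N}$; by the Cartesian-product structure \eqref{sys_perm_dev_sets_DeltaZ}--\eqref{sys_perm_dev_sets_DeltaV} this gives $\Delta\mathbf{z}(k) \in \Delta\mathcal{Z}$ and $\Delta\mathbf{v}(k) \in \Delta\mathcal{V}$. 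Lemma \ref{lemma_Low_error_bounds} supplies $\mathbf{e}(k) \in \mathcal{E}_0$. Applying the definition of the Minkowski sum to $\Delta\mathbf{x}(k) = \Delta\mathbf{z}(k) + \mathbf{e}(k)$ then yields $\Delta\mathbf{x}(k) \in \Delta\mathcal{Z} \oplus \mathcal{E}_0$, and since $K\mathbf{e}(k) \in K\mathcal{E}_0$, the same reasoning on $\Delta\mathbf{u}(k) = \Delta\mathbf{v}(k) + K\mathbf{e}(k)$ gives $\Delta\mathbf{u}(k) \in \Delta\mathcal{V} \oplus K\mathcal{E}_0$.

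There is essentially no hard analytic step — the lemma is a bookkeeping consequence of \eqref{Up_Prediction_Error} and Lemma \ref{lemma_Low_error_bounds}. The only point needing a little care is the legitimacy of $\mathbf{e}(k) \in \mathcal{E}_0$ at \emph{every} time step rather than merely its propagation: this relies on the base case $\mathbf{e}(0) \in \mathcal{E}_0$, which is enforced by the initial-condition constraints \eqref{Up_IC} and \eqref{Low_IC} together with the structured RPI form \eqref{Up_struct_RPI}, so I would either reference the recursive feasibility argument of Section \ref{sec_hier_control_feas} or state the lemma under the standing hypothesis that the controller problems are feasible (so that \eqref{Low_DeltaY_deviation} holds and Lemma \ref{lemma_Low_error_bounds} applies).
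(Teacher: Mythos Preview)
Your proposal is correct and follows essentially the same route as the paper, which simply states that the result follows directly from the decomposition \eqref{Up_Prediction_Error} and Lemma~\ref{lemma_Low_error_bounds}. One small notational slip: constraint \eqref{Low_DeltaY_deviation} bounds the \emph{nominal} output deviation $y_i(j|k)-\hat{y}_i^*(j)=[\Delta z_i^\top\,\Delta v_i^\top]^\top$, not the true $\Delta y_i(k)$ as defined in \eqref{Up_Prediction_Error_Defn}; your conclusion $\Delta z_i\in\Delta\mathcal{Z}_i$, $\Delta v_i\in\Delta\mathcal{V}_i$ is nonetheless exactly right.
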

\begin{pf}
    The proof follows directly from the definitions of $ \Delta x_i(k) $ and $ \Delta u_i(k) $ from \eqref{Up_Prediction_Error} and the result of \textbf{Lemma \ref{lemma_Low_error_bounds}}.
    \hfill \hfill \qed
\end{pf}

Based on the results of \textbf{Lemmas \ref{lemma_Low_error_bounds}} and \textbf{\ref{lemma_pred_error0}}, the nominal outputs determined by the upper-level controller in \eqref{Up_outputs} are constrained to the time-varying tightened output constraint set $ \hat{\mathcal{Y}}_0(\boldsymbol{\delta}(k_0))$. For notational simplicity, let $ \Delta \mathcal{Z} = \Delta \mathcal{Z} (\boldsymbol{\delta}(k_0)) $,  $ \Delta  \mathcal{V} = \Delta \mathcal{V} (\boldsymbol{\delta}(k_0)) $, and $ \mathcal{E}_0 = \mathcal{E}_0 (\boldsymbol{\delta}(k_0)) $. Then the time-varying tightened output constraint set is defined as 
\begin{equation}\label{Up_tight_op_const}
    \hat{\mathcal{Y}}_0(\boldsymbol{\delta}(k_0)) \triangleq \tilde{\mathcal{Y}}_0 \ominus [(\Delta \mathcal{Z} \oplus \mathcal{E}_0) \times (\Delta \mathcal{V} \oplus K \mathcal{E}_0 )],
\end{equation} 
where $ \tilde{\mathcal{Y}}_0 \subseteq \mathcal{Y}_0$ is a tightened output constraint set used to prevent inter-sample constraint violations (see \textbf{Appendix A.1} for details). 
Similarly, in \eqref{Up_terminal}, the nominal terminal state is constrained to the time-varying tightened terminal constraint set $ \hat{\mathcal{T}}_0(\boldsymbol{\delta} (k_0)) $ defined as
\begin{equation}\label{Up_tight_term_const}
    \hat{\mathcal{T}}_0(\boldsymbol{\delta} (k_0)) \triangleq \mathcal{T} \ominus \left(\Delta \mathcal{Z} \oplus \mathcal{E}_0\right).
\end{equation}
Note that $ \hat{\mathcal{Y}}_0(\boldsymbol{\delta}(k_0)) $ and $ \hat{\mathcal{T}}_0(\boldsymbol{\delta} (k_0)) $ are functions of the time-varying permissible output deviations $\boldsymbol{\delta}(k_0)$ due to the definition of $ \Delta \mathcal{Z} (\boldsymbol{\delta}(k_0))$ and $ \Delta \mathcal{V} (\boldsymbol{\delta}(k_0))$ and their direct impact on $ \mathcal{E}_0 (\boldsymbol{\delta}(k_0))$, as established in \textbf{Lemma \ref{lemma_Low_error_bounds}}. 

\section{Hierarchical Control Feasibility}\label{sec_hier_control_feas}
The following establishes recursive feasibility of each controller in the hierarchy and guarantees constraint satisfaction for the closed-loop system.

\begin{assume}\label{assm_IC}
    There exists a feasible solution to $\mathbf{P}_0(\mathbf{x}(0))$ at time step $k = k_0 = 0$ for the initial condition $\mathbf{x}(0)$.
\end{assume}
\begin{lemma}\label{lem_P0_feas_imp_Pi_feas}
If $\mathbf{P}_0(\mathbf{x}(k))$ is feasible time step $k = \nu_0k_0$, then $\mathbf{P}_i(x_i(k)), i \in \mathcal{N},$ is feasible at this time step.
\end{lemma}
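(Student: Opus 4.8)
The plan is to show that feasibility of $\mathbf{P}_0(\mathbf{x}(k))$ at $k = \nu_0 k_0$ directly supplies a feasible candidate for each $\mathbf{P}_i(x_i(k))$. First I would extract the relevant $\mathbf{C}_0$-optimal quantities: from the optimal solution of $\mathbf{P}_0(\mathbf{x}(k))$, construct the upsampled trajectories $\hat{\mathbf{x}}^*(k), \hat{\mathbf{u}}^*(k)$ via \eqref{upsamp_trajs}, and hence the subsystem nominal outputs $\hat{y}_i^*(j)$, disturbances $\hat{w}_i^*(j)$, and terminal state $\hat{x}_i^*(k+N(k))$ via \eqref{nominal_trajs}. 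The natural candidate for $\mathbf{P}_i(x_i(k))$ is simply to track the upsampled $\mathbf{C}_0$ plan: set $z_i(j|k) = \hat{x}_i^*(j)$ and $v_i(j|k) = \hat{u}_i^*(j)$ for $j \in [k, k+N(k)-1]$, i.e. zero planned deviation $\Delta z_i = \Delta v_i = 0$.

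Then I would verify each constraint of \eqref{Low_MPC} for this candidate. For the dynamics \eqref{Low_model}: since $\hat{w}_i^*(j)$ is defined in \eqref{nominal_Distrubance} precisely as the coupling term evaluated along the upsampled $\mathbf{C}_0$ trajectory, and the upsampled trajectory satisfies the full system dynamics \eqref{sys_dynamics} by construction in \eqref{upsamp_trajs}, the candidate $z_i(j|k) = \hat{x}_i^*(j)$ satisfies \eqref{Low_model}; this requires unpacking that $\Pi_i$ applied to \eqref{sys_dynamics} gives $A_{ii}\hat{x}_i^* + B_{ii}\hat{u}_i^* + \hat{w}_i^*$. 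For \eqref{Low_DeltaY_deviation}: $y_i(j|k) - \hat{y}_i^*(j) = 0 \in \Delta\mathcal{Y}_i(\delta_i^*(k_0))$ since zonotopes contain the origin. For \eqref{Low_terminal}: I need $z_i(k+N(k)|k) - \hat{x}_i^*(k+N(k)) \in \Delta\mathcal{Z}_i(\delta_i^*(k_0))$; with the zero-deviation candidate this difference is $0$, which again lies in the zonotope. For \eqref{Low_IC}: $x_i(k) - z_i(k|k) = x_i(k) - \hat{x}_i^*(k)$, which by \textbf{Lemma \ref{lemma_pred_error0}} (and the structured RPI assumption \eqref{Up_struct_RPI}, so that $\Pi_i$ of the system RPI set is $\mathcal{E}_i$) lies in $\Delta\mathcal{Z}_i \oplus \mathcal{E}_i$ — but I need it in $\mathcal{E}_i$ alone, so the zero-deviation candidate is \emph{not} automatically admissible for the initial condition.

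This mismatch is the main obstacle, and resolving it is where constraints \eqref{Up_IC} and \eqref{Up_Pre} of $\mathbf{P}_0$ enter. Constraint \eqref{Up_IC} guarantees $\mathbf{x}(k) - \hat{\mathbf{x}}^*(k_0|k_0) \in \Delta\mathcal{Z}(\boldsymbol{\delta}^*) \oplus \mathcal{E}_0$, so there is a decomposition $x_i(k) - \hat{x}_i^*(k) = \zeta_i + \epsilon_i$ with $\zeta_i \in \Delta\mathcal{Z}_i(\delta_i^*(k_0))$ and $\epsilon_i \in \mathcal{E}_i(\delta_i^*(k_0))$ (using $k = \nu_0 k_0$ so the upsampled state at $k$ equals $\hat{\mathbf{x}}^*(k_0|k_0)$, and the structured forms of $\Delta\mathcal{Z}$ and $\mathcal{E}_0$). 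The fix is to instead choose a candidate with nonzero initial planned deviation $\Delta z_i(k) = \zeta_i$, so that $z_i(k|k) = \hat{x}_i^*(k) + \zeta_i$ makes \eqref{Low_IC} hold with error $\epsilon_i$. Then I must propagate this nonzero deviation forward consistently: pick $v_i(j|k)$ so that the nominal trajectory $z_i(j|k) = \hat{x}_i^*(j) + \Delta z_i(j)$ stays inside the deviation tube — here \eqref{Up_Pre}, stating $\Delta\mathcal{Z} \subseteq \mathrm{Pre}(\Delta\mathcal{Z})$, is exactly what guarantees that from any $\Delta z \in \Delta\mathcal{Z}$ one can select $\Delta v \in \Delta\mathcal{V}$ with $A_D \Delta z + B_D \Delta v \in \Delta\mathcal{Z}$, keeping the deviation in the tube over the whole horizon including the terminal step \eqref{Low_terminal}. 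Finally I would check \eqref{Low_DeltaY_deviation} along this shifted trajectory: $y_i(j|k) - \hat y_i^*(j) = [\Delta z_i(j)^\top\ \Delta v_i(j)^\top]^\top \in \Delta\mathcal{Z}_i \times \Delta\mathcal{V}_i = \Delta\mathcal{Y}_i$, which holds by the tube-invariance just established. Assembling these verifications for every $i \in \mathcal{N}$ completes the proof.
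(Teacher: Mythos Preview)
Your proposal is correct and follows essentially the same approach as the paper: use \eqref{Up_IC} and the structured decompositions \eqref{sys_perm_dev_sets_DeltaZ}, \eqref{Up_struct_RPI} to split $x_i(k)-\hat{x}_i^*(k)$ into a $\Delta\mathcal{Z}_i$-part and an $\mathcal{E}_i$-part, set $z_i(k|k)$ to absorb the $\Delta\mathcal{Z}_i$-part so that \eqref{Low_IC} holds, and then invoke \eqref{Up_Pre} together with the block-diagonal deviation dynamics $\Delta z_i(j+1)=A_{ii}\Delta z_i(j)+B_{ii}\Delta v_i(j)$ to propagate the deviation inside the tube through \eqref{Low_DeltaY_deviation} and \eqref{Low_terminal}. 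Two small remarks: the appeal to \textbf{Lemma~\ref{lemma_pred_error0}} in your motivational false start is not quite the right citation (that lemma concerns closed-loop behavior, whereas here only \eqref{Up_IC} is available), and ``zonotopes contain the origin'' should be stated as ``these deviation zonotopes are centered at the origin''---both are harmless since you correctly switch to \eqref{Up_IC} for the actual construction.
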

\begin{proof}
Let $\{\hat{\mathbf{x}}^*(j)\}_{j = k}^{k+N(k)}$, $\{\hat{\mathbf{u}}^*(j)\}_{j = k}^{k+N(k)-1}$, and $\boldsymbol{\delta}^*(k_0)$ denote upsampled $\mathbf{C}_0$-optimal state and input trajectories and the optimal output deviation determined by $\mathbf{C}_0$ at time step $k = \nu_0k_0$. First, for all $i \in \mathcal{N}$, it is to be shown that there exists an initial condition $\mathbf{z}(k|k) = [z_i(k|k)]$ that simultaneously satisfies the output deviation constraints from \eqref{Low_DeltaY_deviation} and the initial condition constraint from \eqref{Low_IC}. Specifically, the initial condition constraint \eqref{Up_IC} ensures that $\mathbf{x}(k) - \hat{\mathbf{x}}^*(k_0|k_0) \in \Delta \mathcal{Z}(\boldsymbol{\delta}^*(k_0)) \oplus \mathcal{E}_0(\boldsymbol{\delta}^*(k_0))$. Based on the structure of these sets from \eqref{sys_perm_dev_sets_DeltaZ} and \eqref{Up_struct_RPI}, $x_i(k) - \hat{x}_i^*(k_0|k_0) \in \Delta \mathcal{Z}_i(\delta_i^*(k_0)) \oplus \mathcal{E}_i(\delta_i^*(k_0))$ for all $i \in \mathcal{N}$. Since these sets are all zonotopes, let $ \Delta\mathcal{Z}_i(\delta_i^*(k_0)) = \{G_i^z,0\} $ and $ \mathcal{E}_i(\delta_i^*(k_0)) = \{G_i^\varepsilon,0\} $. Therefore, $x_i(k) - \hat{x}_i^*(k_0|k_0) \in \Delta \mathcal{Z}_i(\delta_i^*(k_0)) \oplus \mathcal{E}_i(\delta_i^*(k_0))$ guarantees the existence of $ \xi_i^z $ and $ \xi_i^\varepsilon $ such that $ || \xi_i^z ||_\infty \leq 1 $, $ || \xi_i^\varepsilon ||_\infty \leq 1 $, and
\begin{equation} \label{lem3_x_k_defn}
    x_i(k) - \hat{x}_i^*(k_0|k_0) = G_i^z \xi_i^z + G_i^\varepsilon \xi_i^\varepsilon.
\end{equation}
Choosing $ z_i(k|k) = \hat{x}_i^*(k_0|k_0) + G_i^z \xi_i^z $, ensures that this initial condition satisfies the output deviation constraint from \eqref{Low_DeltaY_deviation}. Solving for $ \hat{x}_i^*(k_0|k_0) $ and plugging into \eqref{lem3_x_k_defn} results in $ x_i(k) - z_i(k|k) =  G_i^\varepsilon \xi_i^\varepsilon $ and thus this choice of initial condition also satisfies the initial condition constraint from \eqref{Low_IC}.

It remains to show the existence of a candidate solution starting from this initial condition $z_i(k|k)$, denoted by the nominal input sequence $\{v_i(j|k)\}_{j = k}^{k+N(k)-1}$ and corresponding nominal state sequence $\{z_i(j|k)\}_{j = k}^{k+N(k)}$, that satisfies the model \eqref{Low_model} and the constraints \eqref{Low_outputs}-\eqref{Low_terminal}. Comparing the candidate solution satisfying the nominal subsystem dynamics from \eqref{Low_model} and the upsampled $\mathbf{C}_0$-optimal trajectories satisfying the the true subsystem dynamics from \eqref{subsys_dynamics} results in $ z_i(j+1|k) - \hat{x}_i^*(j+1) = A_{ii} (z_i(j|k) - \hat{x}_i^*(j)) + B_{ii} (v_i(j|k) - \hat{u}_i^*(j)) $ for all $ j \in [k, k+N(k)-1]$. Since $ z_i(k|k) $ has already been shown to satisfy the output deviation constraint from \eqref{Low_DeltaY_deviation}, $ z_i(k|k) - \hat{x}_i^*(k) \in \Delta \mathcal{Z}_i(\delta_i^*(k_0)) $. From \eqref{Up_Pre} and the definition of the precursor set from \eqref{precursor}, this guarantees the existence of $ v_i(k|k) $ such that $ v_i(k|k) - \hat{u}_i^*(k) \in \Delta \mathcal{V}_i(\delta_i^*(k_0)) $ and $ z_i(k+1|k) - \hat{x}_i^*(k+1) \in \Delta \mathcal{Z}_i(\delta_i^*(k_0)) $. This process is repeated to show that the output deviation constraints from \eqref{Low_DeltaY_deviation} can be satisfied for all time steps and that the terminal constraint from \eqref{Low_terminal} is satisfied at the final time step. 
\end{proof}
\begin{lemma}\label{lem_Low_betw_C0_update}
For all $i \in \mathcal{N}$, if $\mathbf{P}_i(x_i(k))$ is feasible at time step $k$, where $k \; \text{mod} \;\nu_0 = 0$ (i.e at the time of $\mathbf{C}_0$ update), then $\mathbf{P}_i(x_i(k))$ is feasible at each time step $k+1$ through $k+N(k)-1$.
\end{lemma}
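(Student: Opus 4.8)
The plan is to establish recursive feasibility between updates of $\mathbf{C}_0$ by a tail-shifting (candidate-solution) argument, exploiting the fact that the horizon of each $\mathbf{P}_i$ shrinks by exactly one at every step, so that no terminal-invariance reasoning is required. Fix $k$ with $k\bmod\nu_0=0$, let $k_0=k/\nu_0$, and note that for every time step in $[k,\nu_0(k_0+1)-1]$ the data communicated by $\mathbf{C}_0$ — the upsampled trajectories $\hat{y}_i^*(j)$, $\hat{w}_i^*(j)$, the terminal state $\hat{x}_i^*(k+N(k))$, and the scaling $\delta_i^*(k_0)$ — are held fixed; hence so are the sets $\Delta\mathcal{Y}_i(\delta_i^*(k_0))$, $\Delta\mathcal{Z}_i(\delta_i^*(k_0))$, $\Delta\mathcal{V}_i(\delta_i^*(k_0))$, $\mathcal{E}_i(\delta_i^*(k_0))$, and the terminal index $k+N(k)=\nu_0(k_0+1)$ is common to all these problems. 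It therefore suffices to prove the single-step implication: if $\mathbf{P}_i(x_i(k'))$ is feasible for all $i\in\mathcal{N}$ at some $k'$ with $k\le k'<\nu_0(k_0+1)-1$, then $\mathbf{P}_i(x_i(k'+1))$ is feasible for all $i\in\mathcal{N}$; starting the induction from the hypothesis at $k'=k$ and iterating up to $k'=k+N(k)-2$ yields the claim.

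Suppose that at time $k'$ each $\mathbf{P}_i$ has an optimal solution consisting of $z_i^*(k'|k')$, the sequence $\{v_i^*(j|k')\}_{j=k'}^{k'+N(k')-1}$, and the induced nominal states $\{z_i^*(j|k')\}_{j=k'}^{k'+N(k')}$, with the input \eqref{Control_Law} applied to the system. As candidate for $\mathbf{P}_i(x_i(k'+1))$ take the tail of this solution: $z_i(k'+1|k'+1)=z_i^*(k'+1|k')$, together with $v_i(j|k'+1)=v_i^*(j|k')$ and $z_i(j|k'+1)=z_i^*(j|k')$ for $j\in[k'+1,k'+N(k')]$. Since $k'\bmod\nu_0<\nu_0-1$ the horizon simply loses its first step, $N(k'+1)=N(k')-1$ and $k'+1+N(k'+1)=k'+N(k')$, so the candidate has the correct length and terminal index. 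It satisfies the nominal model \eqref{Low_model} because $\{z_i^*(j|k')\}$ does and $\hat{w}_i^*(j)$ is unchanged; it satisfies \eqref{Low_outputs} by definition; it satisfies \eqref{Low_DeltaY_deviation} on $[k'+1,k'+N(k')-1]$ because the time-$k'$ solution satisfied it on $[k',k'+N(k')-1]$ while $\hat{y}_i^*(j)$ is unchanged; and it satisfies \eqref{Low_terminal} because $z_i^*(k'+N(k')|k')-\hat{x}_i^*(k'+N(k'))\in\Delta\mathcal{Z}_i(\delta_i^*(k_0))$ held at time $k'$ and $\hat{x}_i^*(k'+N(k'))$ is unchanged.

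The remaining constraint is \eqref{Low_IC}, i.e. $e_i(k'+1)\triangleq x_i(k'+1)-z_i(k'+1|k'+1)\in\mathcal{E}_i(\delta_i^*(k_0))$. Substituting $x_i(k'+1)=A_{ii}x_i(k')+B_{ii}u_i(k')+w_i(k')$ with $u_i(k')$ from \eqref{Control_Law}, $z_i(k'+1|k'+1)=z_i^*(k'+1|k')=A_{ii}z_i^*(k'|k')+B_{ii}v_i^*(k'|k')+\hat{w}_i^*(k')$, and $e_i(k')=x_i(k')-z_i^*(k'|k')$ reproduces exactly the error recursion \eqref{subsystem_error_dynamics}. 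Feasibility of every $\mathbf{P}_j$ at $k'$ gives $e_j(k')\in\mathcal{E}_j(\delta_j^*(k_0))$ from \eqref{Low_IC} and $\Delta z_j(k')\in\Delta\mathcal{Z}_j(\delta_j^*(k_0))$, $\Delta v_j(k')\in\Delta\mathcal{V}_j(\delta_j^*(k_0))$ from \eqref{Low_DeltaY_deviation}; stacking over $j\in\mathcal{N}$, the structured form \eqref{Up_struct_RPI} gives $\mathbf{e}(k')\in\mathcal{E}_0(\boldsymbol{\delta}^*(k_0))$, and applying the argument of \textbf{Lemma \ref{lemma_Low_error_bounds}} with the deviation sets frozen at index $k_0$ (so $\mathcal{E}_0(\boldsymbol{\delta}^*(k_0))$ satisfies \eqref{Up_RPI}) yields $\mathbf{e}(k'+1)\in\mathcal{E}_0(\boldsymbol{\delta}^*(k_0))$, hence $e_i(k'+1)\in\mathcal{E}_i(\delta_i^*(k_0))$ for each $i$. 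This verifies \eqref{Low_IC} and completes the induction step.

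I expect the only delicate point to be this last step: subsystem $i$'s error recursion is driven by $\Delta w_i(k')$, which depends on the neighbors' planned deviations and prediction errors, so the containment $e_i(k'+1)\in\mathcal{E}_i(\delta_i^*(k_0))$ cannot be argued one subsystem at a time but must pass through the system-level invariance of the structured set $\mathcal{E}_0$ — which is precisely why the hypothesis quantifies over all $i\in\mathcal{N}$ and why \eqref{Up_struct_RPI} is imposed. Everything else is bookkeeping kept painless by the shrinking horizon; in particular, no precursor-set condition \eqref{Up_Pre} is needed here, in contrast with \textbf{Lemma \ref{lem_P0_feas_imp_Pi_feas}}.
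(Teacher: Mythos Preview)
Your proof is correct and follows essentially the same approach as the paper: a tail-shifting candidate solution combined with invariance of the structured RPI set $\mathcal{E}_0(\boldsymbol{\delta}^*(k_0))$ via \textbf{Lemma~\ref{lemma_Low_error_bounds}} to handle \eqref{Low_IC}, then induction over the shrinking horizon. Your write-up is more explicit than the paper's (in particular, you spell out why the containment $e_i(k'+1)\in\mathcal{E}_i$ must be argued at the system level through \eqref{Up_struct_RPI} rather than subsystem by subsystem), but the underlying argument is identical.
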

\begin{proof}
Let the feasible solution for $\mathbf{P}_i(x_i(k))$ at time step $k$ be defined by the optimal nominal input sequence $\{v_i^*(j|k)\}_{j = k}^{k+N(k)-1}$
and corresponding optimal nominal state sequence $\{z_i^*(j|k)\}_{j = k}^{k+N(k)}$ satisfying \eqref{Low_model}. While \eqref{Low_IC} guarantees $x_i(k) - z_i^*(k|k) \in \mathcal{E}_i(\delta_i^*(k_0))$, the feasibility of $\mathbf{P}_j(x_j(k))$, $j \in \mathcal{N} \setminus \{ i \} $, ensures that the disturbances $\Delta w_i$ from \eqref{distubance_error} due to subsystem coupling are bounded to $\Delta \mathcal{W}$ used to define $\mathcal{E}_0 = \mathcal{E}_1 \times \cdots \times \mathcal{E}_M$ in \textbf{Lemma \ref{lemma_Low_error_bounds}}. Thus $x_i(k+1) - z_i^*(k+1|k) \in \mathcal{E}_i(\delta_i^*(k_0))$,
and 
$\{z_i^*(j|k)\}_{j = k+1}^{k+N(k)}$ and $\{v_i^*(j|k)\}_{j = k+1}^{k+N(k)-1}$ are feasible nominal state and input sequences, which are the tails of sequences determined at previous time step $k$. Thus, $\mathbf{P}_i(x_i(k+1))$ is feasible and by induction, $\mathbf{P}_i(x_i(j))$, $\forall j \in [k+1, k+N(k)-1]$ is recursively feasible.
\end{proof}
\begin{lemma}\label{lem_Low_Up_update}
If $\mathbf{P}_i(x_i(k-1))$ $\forall i \in \mathcal{N}$ had feasible solutions at the previous time step $k-1$, then $\mathbf{P}_0(\mathbf{x}(k))$ has a feasible solution at current time step $k = \nu_0k_0$.
\end{lemma}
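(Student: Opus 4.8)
The plan is to build an explicit feasible candidate for $\mathbf{P}_0(\mathbf{x}(k))$ out of the optimizer of the previous $\mathbf{C}_0$ update. Let $\hat{\mathbf{x}}^*(\cdot|k_0-1)$, $\hat{\mathbf{u}}^*(\cdot|k_0-1)$, and $\boldsymbol{\delta}^*(k_0-1)$ denote the optimal solution of $\mathbf{P}_0(\mathbf{x}(k-\nu_0))$ solved at the most recent update step $k_0-1$; this solution exists because the lower-level controllers at step $k-1$ were supplied with $\mathbf{C}_0$-optimal data. Since $\mathbf{C}_0$ has a \emph{shrinking} horizon with a fixed terminal index $k_{0,F}$, I would take the candidate to be the tail of that solution together with the unchanged deviation vector: set $\boldsymbol{\delta}(k_0) = \boldsymbol{\delta}^*(k_0-1)$, $\hat{\mathbf{u}}(j|k_0) = \hat{\mathbf{u}}^*(j|k_0-1)$ for $j \in [k_0, k_{0,F}-1]$, $\hat{\mathbf{x}}(k_0|k_0) = \hat{\mathbf{x}}^*(k_0|k_0-1)$ (the one-step-ahead prediction made at the previous update), and propagate \eqref{Up_model} forward, which reproduces $\hat{\mathbf{x}}(j|k_0) = \hat{\mathbf{x}}^*(j|k_0-1)$ for $j \in [k_0+1, k_{0,F}]$.

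With this choice, every constraint of $\mathbf{P}_0(\mathbf{x}(k))$ except the initial-condition constraint \eqref{Up_IC} is inherited for free. The model \eqref{Up_model} holds because the candidate is a sub-trajectory of a trajectory satisfying \eqref{Up_model}. The tightened output and terminal constraints \eqref{Up_outputs}--\eqref{Up_terminal} hold because $\hat{\mathcal{Y}}_0(\cdot)$ and $\hat{\mathcal{T}}_0(\cdot)$ depend only on $\boldsymbol{\delta}(k_0) = \boldsymbol{\delta}^*(k_0-1)$, and the previous solution already satisfied these constraints over the index range $[k_0-1, k_{0,F}]$, which contains $[k_0, k_{0,F}]$. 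The precursor-set constraint \eqref{Up_Pre} holds because $\Delta\mathcal{Z}(\boldsymbol{\delta}(k_0)) = \Delta\mathcal{Z}(\boldsymbol{\delta}^*(k_0-1))$ was shown at the previous update to be contained in its own precursor set.

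The substance is then to verify \eqref{Up_IC}, i.e. that $\mathbf{x}(k) - \hat{\mathbf{x}}^*(k_0|k_0-1) \in \Delta\mathcal{Z}(\boldsymbol{\delta}^*(k_0-1)) \oplus \mathcal{E}_0(\boldsymbol{\delta}^*(k_0-1))$. First I would show that the upsampled $\mathbf{C}_0$-optimal state trajectory from \eqref{upsamp_trajs}, propagated over the $\nu_0$ steps from $k-\nu_0$ to $k=\nu_0 k_0$, coincides with the one-step prediction $\hat{\mathbf{x}}^*(k_0|k_0-1)$; this is immediate from $A_0 = A^{\nu_0}$, $B_0 = \sum_{j=0}^{\nu_0-1}A^j B$ and the piecewise-constant input, since then the forward simulation in \eqref{upsamp_trajs} over $\nu_0$ steps is exactly the model update \eqref{Up_model} applied at $k_0-1$. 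Hence $\mathbf{x}(k) - \hat{\mathbf{x}}^*(k_0|k_0-1) = \Delta\mathbf{x}(k)$, the upper-level prediction error, and the required containment is precisely the conclusion of \textbf{Lemma \ref{lemma_pred_error0}}. This invocation is legitimate exactly because the hypothesis of the present lemma — feasibility of every $\mathbf{P}_i(x_i(k-1))$ — ensures, through \textbf{Lemma \ref{lemma_Low_error_bounds}} and the RPI inclusion \eqref{Up_RPI}, that the lower-level planned deviations remain in $\Delta\mathcal{Z}_i,\Delta\mathcal{V}_i$ and the lower-level errors $\mathbf{e}(k)$ remain in $\mathcal{E}_0$ through step $k$, so the disturbance stays in $\Delta\mathcal{W}$ and the error-dynamics argument carries to time $k$.

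I expect the main obstacle to be the bookkeeping in that last step: pinning down the upsampled nominal state at the \emph{boundary} index $k = \nu_0 k_0$ and identifying it with $\hat{\mathbf{x}}^*(k_0|k_0-1)$, since \eqref{upsamp_trajs} is written for $k \in [\nu_0 k_0, \nu_0(k_0+1)-1]$ and consistency at the endpoint has to be argued; and, relatedly, making sure the dependency chain (lower-level feasibility at $k-1$ $\Rightarrow$ \textbf{Lemma \ref{lemma_Low_error_bounds}} $\Rightarrow$ \textbf{Lemma \ref{lemma_pred_error0}}) is genuinely available at step $k$ without circularly presupposing feasibility of $\mathbf{P}_0$ at $k_0$, which is the very thing being proved.
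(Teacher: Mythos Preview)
Your proposal is correct and follows essentially the same approach as the paper: take the tail of the previous $\mathbf{C}_0$ solution together with $\boldsymbol{\delta}^*(k_0-1)$, observe that \eqref{Up_model}, \eqref{Up_outputs}, \eqref{Up_terminal}, and \eqref{Up_Pre} are inherited by time-invariance, and then verify \eqref{Up_IC} by decomposing $\mathbf{x}(k) - \hat{\mathbf{x}}^*(k_0|k_0-1)$ into a $\Delta\mathcal{Z}$-part and an $\mathcal{E}_0$-part. The only presentational difference is that the paper makes the source of the $\Delta\mathcal{Z}_i$ bound at the boundary time explicit---it comes from the \emph{terminal} constraint \eqref{Low_terminal} of $\mathbf{P}_i(x_i(k-1))$ (since $N(k-1)=1$, the terminal state is $z_i^*(k|k-1)$)---and then combines this with the RPI property to get $x_i(k)-z_i^*(k|k-1)\in\mathcal{E}_i$, whereas you package both facts into an invocation of \textbf{Lemma~\ref{lemma_pred_error0}}.
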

\begin{proof}
Let the candidate solution to $\mathbf{P}_0(\mathbf{x}(k))$ be the optimal nominal state and input sequences $\{\hat{\mathbf{x}}^*(j|k_0 -1)\}_{j = k_0}^{k_0+N_0(k_0)}$, $\{\hat{\mathbf{u}}^*(j|k_0 -1)\}_{j = k_0}^{k_0+N_0(k_0)-1}$, corresponding to the tails of the optimal solution determined at previous time step $k_0 -1$, and the previously planned deviation bound $\boldsymbol{\delta}^*(k_0-1)$. Since \eqref{Up_model}, \eqref{Up_outputs}, \eqref{Up_terminal}, and \eqref{Up_Pre} are time-invariant, the candidate solution satisfies these constraints. To show that $\hat{\mathbf{x}}^*(k_0|k_0-1)$ is a feasible initial condition, consider the following. Since, $\mathbf{P}_i(x_i(k-1))$ is feasible at time step $k-1$, the terminal state $z_i^*(k|k-1) = z_i^*(k_0)$
satisfies $z_i^*(k_0) - \hat{x}_i^*(k_0|k_0-1) \in \Delta \mathcal{Z}_i(\boldsymbol{\delta}_i^*(k_0-1))$ for every $\mathbf{S}_i$. Additionally, using the invariance of $\mathcal{E}_i(\delta_i^*(k_0-1))$ under control law \eqref{Control_Law}, $x_i(k) - z_i^*(k_0) \in \mathcal{E}_i(\delta_i^*(k_0-1))$. Thus, by combining these statements for all subsystems, $\hat{\mathbf{x}}^*(k_0|k_0-1)$ satisfies \eqref{Up_IC}. 
\end{proof}
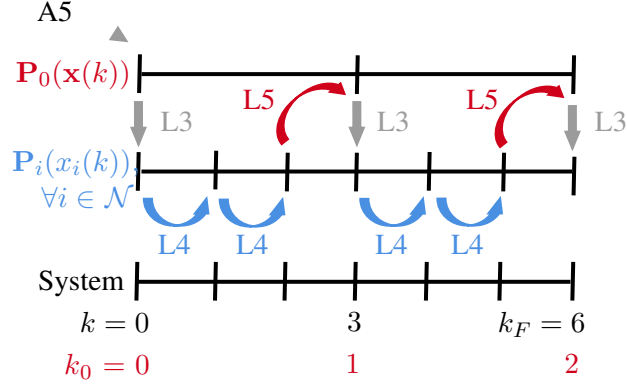
\begin{figure}
\centering
\tikzset{every picture/.style={line width=0.75pt}} 
\begin{tikzpicture}[x=0.75pt,y=0.75pt,yscale=-1,xscale=1]

\draw [line width=1.5]    (177.03,59) -- (392.03,59) ;
\draw [line width=1.5]    (176.27,49.06) -- (175.84,68.52) ;
\draw [line width=1.5]    (284.97,49.54) -- (284.55,69) ;
\draw [line width=1.5]    (392.96,49.54) -- (392.54,69) ;
\draw [line width=1.5]    (175.62,98.23) -- (175.2,117.69) ;
\draw [line width=1.5]    (284.55,98.54) -- (284.12,118) ;
\draw [line width=1.5]    (250.04,98.54) -- (249.61,118) ;
\draw [line width=1.5]    (214.32,97.54) -- (213.9,117) ;
\draw [line width=1.5]    (320.86,98.17) -- (320.44,117.63) ;
\draw [line width=1.5]    (358.02,98.54) -- (357.6,118) ;
\draw [line width=1.5]    (393.14,100.54) -- (392.71,120) ;
\draw  [color={rgb, 255:red, 74; green, 144; blue, 226 }  ,draw opacity=1 ][fill={rgb, 255:red, 74; green, 144; blue, 226 }  ,fill opacity=1 ] (192.86,135.21) .. controls (184.78,135.24) and (178.22,129.18) .. (178.2,121.69) -- (182.27,121.67) .. controls (182.29,129.17) and (188.85,135.22) .. (196.93,135.19) ;\draw  [color={rgb, 255:red, 74; green, 144; blue, 226 }  ,draw opacity=1 ][fill={rgb, 255:red, 74; green, 144; blue, 226 }  ,fill opacity=1 ] (196.93,135.19) .. controls (202,135.17) and (206.45,132.77) .. (209.07,129.13) -- (210.43,129.12) -- (209.49,121.57) -- (203.64,129.15) -- (205,129.14) .. controls (202.38,132.78) and (197.92,135.19) .. (192.86,135.21)(196.93,135.19) -- (192.86,135.21) ;
\draw [line width=1.5]    (174.51,163.46) -- (391.17,163.46) ;
\draw [line width=1.5]    (392.54,153.46) -- (392.11,172.92) ;
\draw [line width=1.5]    (356.14,153.46) -- (355.71,172.92) ;
\draw [line width=1.5]    (319.44,153.46) -- (319.01,172.92) ;
\draw [line width=1.5]    (283.95,153.46) -- (283.52,172.92) ;
\draw [line width=1.5]    (248.86,153.46) -- (248.44,172.92) ;
\draw [line width=1.5]    (214.15,153.46) -- (213.72,172.92) ;
\draw [line width=1.5]    (174.94,153.46) -- (174.51,172.92) ;
\draw [shift={(170.55,43)}, rotate = 212.07] [fill={rgb, 255:red, 155; green, 155; blue, 155 }  ,fill opacity=1 ][line width=0.08]  [draw opacity=0] (8.93,-4.29) -- (0,0) -- (8.93,4.29) -- cycle    ;
\draw  [color={rgb, 255:red, 155; green, 155; blue, 155 }  ,draw opacity=1 ][fill={rgb, 255:red, 155; green, 155; blue, 155 }  ,fill opacity=1 ] (393.86,75.05) -- (393.88,87.95) -- (395.63,87.94) -- (392.14,96.54) -- (388.63,87.95) -- (390.38,87.95) -- (390.36,75.06) -- cycle ;
\draw  [color={rgb, 255:red, 74; green, 144; blue, 226 }  ,draw opacity=1 ][fill={rgb, 255:red, 74; green, 144; blue, 226 }  ,fill opacity=1 ] (230.74,135.64) .. controls (222.66,135.67) and (216.09,129.61) .. (216.08,122.12) -- (220.15,122.1) .. controls (220.17,129.6) and (226.73,135.65) .. (234.81,135.62) ;\draw  [color={rgb, 255:red, 74; green, 144; blue, 226 }  ,draw opacity=1 ][fill={rgb, 255:red, 74; green, 144; blue, 226 }  ,fill opacity=1 ] (234.81,135.62) .. controls (239.88,135.6) and (244.33,133.2) .. (246.95,129.56) -- (248.31,129.55) -- (247.37,122) -- (241.52,129.58) -- (242.88,129.57) .. controls (240.26,133.21) and (235.8,135.62) .. (230.74,135.64)(234.81,135.62) -- (230.74,135.64) ;
\draw  [color={rgb, 255:red, 74; green, 144; blue, 226 }  ,draw opacity=1 ][fill={rgb, 255:red, 74; green, 144; blue, 226 }  ,fill opacity=1 ] (300.91,135.27) .. controls (292.83,135.29) and (286.27,129.24) .. (286.25,121.74) -- (290.32,121.73) .. controls (290.34,129.23) and (296.9,135.28) .. (304.98,135.25) ;\draw  [color={rgb, 255:red, 74; green, 144; blue, 226 }  ,draw opacity=1 ][fill={rgb, 255:red, 74; green, 144; blue, 226 }  ,fill opacity=1 ] (304.98,135.25) .. controls (310.05,135.23) and (314.51,132.83) .. (317.12,129.19) -- (318.48,129.18) -- (317.54,121.63) -- (311.69,129.21) -- (313.05,129.2) .. controls (310.43,132.84) and (305.98,135.25) .. (300.91,135.27)(304.98,135.25) -- (300.91,135.27) ;
\draw  [color={rgb, 255:red, 74; green, 144; blue, 226 }  ,draw opacity=1 ][fill={rgb, 255:red, 74; green, 144; blue, 226 }  ,fill opacity=1 ] (339.4,135.14) .. controls (331.32,135.17) and (324.76,129.11) .. (324.74,121.62) -- (328.81,121.6) .. controls (328.83,129.1) and (335.4,135.15) .. (343.48,135.12) ;\draw  [color={rgb, 255:red, 74; green, 144; blue, 226 }  ,draw opacity=1 ][fill={rgb, 255:red, 74; green, 144; blue, 226 }  ,fill opacity=1 ] (343.48,135.12) .. controls (348.54,135.1) and (353,132.7) .. (355.61,129.06) -- (356.97,129.05) -- (356.03,121.5) -- (350.18,129.08) -- (351.54,129.07) .. controls (348.93,132.71) and (344.47,135.12) .. (339.4,135.14)(343.48,135.12) -- (339.4,135.14) ;
\draw  [color={rgb, 255:red, 208; green, 2; blue, 27 }  ,draw opacity=1 ][fill={rgb, 255:red, 208; green, 2; blue, 27 }  ,fill opacity=1 ] (253.44,71.73) .. controls (245.21,78.35) and (242.35,88.42) .. (247.03,94.23) -- (250.19,91.69) .. controls (245.5,85.88) and (248.37,75.81) .. (256.6,69.19) ;\draw  [color={rgb, 255:red, 208; green, 2; blue, 27 }  ,draw opacity=1 ][fill={rgb, 255:red, 208; green, 2; blue, 27 }  ,fill opacity=1 ] (256.6,69.19) .. controls (262.7,64.27) and (270.05,62.64) .. (275.34,64.51) -- (276.39,63.66) -- (278.39,68.99) -- (271.13,67.9) -- (272.18,67.05) .. controls (266.89,65.19) and (259.54,66.81) .. (253.44,71.73)(256.6,69.19) -- (253.44,71.73) ;
\draw [line width=1.5]    (177.03,108) -- (392.03,108) ;
\draw  [color={rgb, 255:red, 208; green, 2; blue, 27 }  ,draw opacity=1 ][fill={rgb, 255:red, 208; green, 2; blue, 27 }  ,fill opacity=1 ] (362.44,73.22) .. controls (354.21,79.84) and (351.35,89.92) .. (356.03,95.72) -- (359.19,93.18) .. controls (354.5,87.37) and (357.37,77.3) .. (365.6,70.68) ;\draw  [color={rgb, 255:red, 208; green, 2; blue, 27 }  ,draw opacity=1 ][fill={rgb, 255:red, 208; green, 2; blue, 27 }  ,fill opacity=1 ] (365.6,70.68) .. controls (371.7,65.76) and (379.05,64.13) .. (384.34,66) -- (385.39,65.15) -- (387.39,70.48) -- (380.13,69.39) -- (381.18,68.54) .. controls (375.89,66.68) and (368.54,68.3) .. (362.44,73.22)(365.6,70.68) -- (362.44,73.22) ;
\draw  [color={rgb, 255:red, 155; green, 155; blue, 155 }  ,draw opacity=1 ][fill={rgb, 255:red, 155; green, 155; blue, 155 }  ,fill opacity=1 ] (286.27,73.05) -- (286.29,85.95) -- (288.04,85.94) -- (284.55,94.54) -- (281.04,85.95) -- (282.79,85.95) -- (282.77,73.06) -- cycle ;
\draw  [color={rgb, 255:red, 155; green, 155; blue, 155 }  ,draw opacity=1 ][fill={rgb, 255:red, 155; green, 155; blue, 155 }  ,fill opacity=1 ] (177.27,73.05) -- (177.29,85.95) -- (179.04,85.94) -- (175.55,94.54) -- (172.04,85.95) -- (173.79,85.95) -- (173.77,73.06) -- cycle ;

\draw (112.84,50.4) node [anchor=north west][inner sep=0.75pt]    {$\mathbf{\textcolor[rgb]{0.82,0.01,0.11}{P}}\textcolor[rgb]{0.82,0.01,0.11}{_{{0}}}\textcolor[rgb]{0.82,0.01,0.11}{\mathnormal{(}}\mathbf{\textcolor[rgb]{0.82,0.01,0.11}{x}}\mathnormal{\textcolor[rgb]{0.82,0.01,0.11}{(}\textcolor[rgb]{0.82,0.01,0.11}{k}\textcolor[rgb]{0.82,0.01,0.11}{)}\textcolor[rgb]{0.82,0.01,0.11}{)}}$};
\draw (104.34,93.4) node [anchor=north west][inner sep=0.75pt]    {$ \begin{array}{l}
\mathbf{\textcolor[rgb]{0.29,0.56,0.89}{P}\textcolor[rgb]{0.29,0.56,0.89}{_{\mathnormal{i}}}}\mathnormal{\textcolor[rgb]{0.29,0.56,0.89}{(}\textcolor[rgb]{0.29,0.56,0.89}{x}\textcolor[rgb]{0.29,0.56,0.89}{_{\mathnormal{i}}}\textcolor[rgb]{0.29,0.56,0.89}{(}\textcolor[rgb]{0.29,0.56,0.89}{k}\textcolor[rgb]{0.29,0.56,0.89}{)}\textcolor[rgb]{0.29,0.56,0.89}{)}}\textcolor[rgb]{0.29,0.56,0.89}{,}\\
\textcolor[rgb]{0.29,0.56,0.89}{\ \ \ \forall }\textcolor[rgb]{0.29,0.56,0.89}{\mathnormal{i\in }}\textcolor[rgb]{0.29,0.56,0.89}{\ \mathcal{N}}
\end{array}$};
\draw (144.75,177.32) node [anchor=north west][inner sep=0.75pt]    {$k = 0$};
\draw (278.26,177.32) node [anchor=north west][inner sep=0.75pt]    {$3$};
\draw (353.45,177.32) node [anchor=north west][inner sep=0.75pt]    {$k_{F} = 6$};
\draw (136.93,198.32) node [anchor=north west][inner sep=0.75pt]    {$\mathnormal{\textcolor[rgb]{0.82,0.01,0.11}{k}}\textcolor[rgb]{0.82,0.01,0.11}{_{0}}\ \textcolor[rgb]{0.82,0.01,0.11}{=}\textcolor[rgb]{0.82,0.01,0.11}{\ 0}$};
\draw (277.44,198.32) node [anchor=north west][inner sep=0.75pt]    {$\textcolor[rgb]{0.82,0.01,0.11}{1}$};
\draw (386.62,198.32) node [anchor=north west][inner sep=0.75pt]    {$\textcolor[rgb]{0.82,0.01,0.11}{2}$};
\draw (123.93,156) node [anchor=north west][inner sep=0.75pt]   [align=left] {{\fontfamily{ptm}\selectfont System}};
\draw (123.47,21) node [anchor=north west][inner sep=0.75pt]   [align=left] {{\fontfamily{ptm}\selectfont A5}};
\draw (337.74,65) node [anchor=north west][inner sep=0.75pt]   [align=left] {{\fontfamily{ptm}\selectfont \textcolor[rgb]{0.82,0.01,0.11}{L5}}};
\draw (226.25,65) node [anchor=north west][inner sep=0.75pt]   [align=left] {{\fontfamily{ptm}\selectfont \textcolor[rgb]{0.82,0.01,0.11}{L5}}};
\draw (401.86,76.05) node [anchor=north west][inner sep=0.75pt]   [align=left] {{\fontfamily{ptm}\selectfont \textcolor[rgb]{0.61,0.61,0.61}{L3}}};
\draw (183.9,138) node [anchor=north west][inner sep=0.75pt]  [color={rgb, 255:red, 74; green, 144; blue, 226 }  ,opacity=1 ] [align=left] {{\fontfamily{ptm}\selectfont \textcolor[rgb]{0.29,0.56,0.89}{L4}}};
\draw (222.74,138) node [anchor=north west][inner sep=0.75pt]  [color={rgb, 255:red, 74; green, 144; blue, 226 }  ,opacity=1 ] [align=left] {{\fontfamily{ptm}\selectfont \textcolor[rgb]{0.29,0.56,0.89}{L4}}};
\draw (292.32,138) node [anchor=north west][inner sep=0.75pt]  [color={rgb, 255:red, 74; green, 144; blue, 226 }  ,opacity=1 ] [align=left] {{\fontfamily{ptm}\selectfont \textcolor[rgb]{0.29,0.56,0.89}{L4}}};
\draw (330.18,138) node [anchor=north west][inner sep=0.75pt]  [color={rgb, 255:red, 74; green, 144; blue, 226 }  ,opacity=1 ] [align=left] {{\fontfamily{ptm}\selectfont \textcolor[rgb]{0.29,0.56,0.89}{L4}}};
\draw (293.19,76.05) node [anchor=north west][inner sep=0.75pt]   [align=left] {{\fontfamily{ptm}\selectfont \textcolor[rgb]{0.61,0.61,0.61}{L3}}};
\draw (185.19,76.05) node [anchor=north west][inner sep=0.75pt]   [align=left] {{\fontfamily{ptm}\selectfont \textcolor[rgb]{0.61,0.61,0.61}{L3}}};
\end{tikzpicture}
\vspace{-5pt}
\caption{Schematic showing how \textbf{Assumption \ref{assm_IC}} and \textbf{Lemmas \ref{lem_P0_feas_imp_Pi_feas}-\ref{lem_Low_Up_update}} are used to establish feasibility of two-level hierarchical controller with coupling between subsystems in the lower-level.}   \label{fig_hier_feasibility}
\end{figure}
\begin{theorem}
Following \emph{\textbf{Algorithm \ref{Hier_Algorithm}}} for a two-level hierarchical controller with $M$ controllers in the lower-level, all control problems, $\mathbf{P}_0(\mathbf{x}(k))$ and $\mathbf{P}_i(x_i(k))$, $\forall i \in \mathcal{N}$, are feasible,
resulting in system state and input trajectories satisfying state, input, and output constraints from \eqref{sys_cons} and terminal constraint from \eqref{sys_Tcons}.
\end{theorem}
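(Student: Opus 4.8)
The plan is to prove feasibility of the entire hierarchy by induction over the coarse time index $k_0$, chaining Assumption~\ref{assm_IC} with Lemmas~\ref{lem_P0_feas_imp_Pi_feas}, \ref{lem_Low_betw_C0_update}, and \ref{lem_Low_Up_update}, and then to deduce closed-loop constraint satisfaction from the prediction-error tube bounds of Lemmas~\ref{lemma_Low_error_bounds} and \ref{lemma_pred_error0} together with the definitions \eqref{Up_tight_op_const}, \eqref{Up_tight_term_const} of the tightened sets.

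For the feasibility part, the base case is Assumption~\ref{assm_IC}, which gives a feasible solution of $\mathbf{P}_0(\mathbf{x}(0))$ at $k=\nu_0 k_0=0$. For the inductive step, assume $\mathbf{P}_0(\mathbf{x}(\nu_0 k_0))$ is feasible at a coarse update $k=\nu_0 k_0$ with $k_0\in[0,k_{0,F}-1]$. Lemma~\ref{lem_P0_feas_imp_Pi_feas} then makes every $\mathbf{P}_i(x_i(\nu_0 k_0))$, $i\in\mathcal{N}$, feasible; because the lower-level horizon has just reset so that $N(\nu_0 k_0)=\bar N=\nu_0$, Lemma~\ref{lem_Low_betw_C0_update} propagates feasibility of each $\mathbf{P}_i$ through all fine steps $k\in[\nu_0 k_0,\,\nu_0(k_0+1)-1]$; and Lemma~\ref{lem_Low_Up_update}, invoked with $k-1=\nu_0(k_0+1)-1$, re-establishes feasibility of $\mathbf{P}_0(\mathbf{x}(\nu_0(k_0+1)))$ at the next coarse update. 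Iterating from $k_0=0$ to $k_0=k_{0,F}$ (the chain depicted in Fig.~\ref{fig_hier_feasibility}) yields feasibility of $\mathbf{P}_0$ at every coarse update and of every $\mathbf{P}_i$ at every fine step $k\in[0,k_F-1]$; in particular, at each coarse update a feasible nominal initial condition with $x_i(k)-z_i(k|k)\in\mathcal{E}_i(\delta_i^*(k_0))$ exists by the construction in the proof of Lemma~\ref{lem_P0_feas_imp_Pi_feas}, while between updates the feasible solutions are tails of previously computed ones as in the proof of Lemma~\ref{lem_Low_betw_C0_update}.

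Constraint satisfaction then follows from the tube structure. At each fine step $k$, applying \eqref{Control_Law} makes $e_i(k)=x_i(k)-z_i^*(k|k)$ obey the closed-loop error recursion of Lemma~\ref{lemma_Low_error_bounds}, so with $\mathcal{E}_0=\mathcal{E}_1\times\dots\times\mathcal{E}_M$ re-initialized at each coarse update by \eqref{Low_IC} one gets $\mathbf{e}(k)\in\mathcal{E}_0(\boldsymbol{\delta}^*(k_0))$ for all $k$; combining this with the planned-deviation bounds $\Delta z_i(k)\in\Delta\mathcal{Z}_i$, $\Delta v_i(k)\in\Delta\mathcal{V}_i$ enforced by \eqref{Low_DeltaY_deviation}, the decomposition \eqref{Up_Prediction_Error}, and Lemma~\ref{lemma_pred_error0} gives $\mathbf{y}(k)-\hat{\mathbf{y}}^*(k)\in(\Delta\mathcal{Z}\oplus\mathcal{E}_0)\times(\Delta\mathcal{V}\oplus K\mathcal{E}_0)$. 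Since $\hat{\mathbf{y}}^*(k)$ is the upsample \eqref{upsamp_trajs}, \eqref{nominal_trajs} of a nominal upper-level output constrained by \eqref{Up_outputs} to $\hat{\mathcal{Y}}_0(\boldsymbol{\delta}(k_0))=\tilde{\mathcal{Y}}_0\ominus[(\Delta\mathcal{Z}\oplus\mathcal{E}_0)\times(\Delta\mathcal{V}\oplus K\mathcal{E}_0)]$, the inter-sample tightening of $\tilde{\mathcal{Y}}_0\subseteq\mathcal{Y}_0=\mathcal{Y}$ (Appendix~A.1) ensures the upsampled output plus the deviation tube remains inside $\mathcal{Y}$ at every fine step, i.e. $\mathbf{y}(k)\in\mathcal{Y}$, hence $\mathbf{x}(k)\in\mathcal{X}$ and $\mathbf{u}(k)\in\mathcal{U}$ for all $k\in[0,k_F-1]$. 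For the terminal constraint, on the last coarse interval \eqref{Low_terminal} and \eqref{Up_terminal} give, componentwise, $z_i(k_F)-\hat{x}_i^*(k_F)\in\Delta\mathcal{Z}_i$ and $\hat{x}_i^*(k_F)\in\Pi_i\hat{\mathcal{T}}_0(\boldsymbol{\delta}(k_0))$, while $x_i(k_F)-z_i(k_F)=e_i(k_F)\in\mathcal{E}_i$; summing over subsystems and using the product structures \eqref{Up_struct_RPI}, \eqref{sys_perm_dev_sets_DeltaZ} and the definition \eqref{Up_tight_term_const} yields $\mathbf{x}(k_F)\in\hat{\mathcal{T}}_0\oplus\Delta\mathcal{Z}\oplus\mathcal{E}_0\subseteq\mathcal{T}$.

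The step I expect to be the main obstacle is not any isolated estimate but the synchronization across the two timescales: verifying that the three feasibility lemmas hand off to one another with consistent indices (the last fine step covered within one coarse interval being exactly the step $k-1$ that Lemma~\ref{lem_Low_Up_update} requires), that the shrinking and shrinking-and-resetting horizons always leave a well-defined tail candidate solution at each transition, that the structured forms \eqref{Up_struct_RPI}, \eqref{sys_perm_dev_sets_DeltaZ}, \eqref{sys_perm_dev_sets_DeltaV} let every subsystem-level statement assemble into the system-level statement used by $\mathbf{C}_0$, and that $\tilde{\mathcal{Y}}_0$ really does absorb the upsampling drift uniformly across a coarse interval so that the Minkowski-sum bound on $\mathbf{y}(k)$ closes back inside $\mathcal{Y}$ at the intermediate fine steps, not only at the coarse steps.
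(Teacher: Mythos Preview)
Your proposal is correct and follows essentially the same approach as the paper: feasibility is obtained by chaining Assumption~\ref{assm_IC} with Lemmas~\ref{lem_P0_feas_imp_Pi_feas}--\ref{lem_Low_Up_update} along the coarse time axis (exactly the hand-off depicted in Fig.~\ref{fig_hier_feasibility}), and constraint satisfaction is obtained from the tube decomposition \eqref{Up_Prediction_Error}, the error bounds of Lemmas~\ref{lemma_Low_error_bounds}--\ref{lemma_pred_error0}, and the Pontryagin-difference definitions \eqref{Up_tight_op_const}, \eqref{Up_tight_term_const}. The paper packages the deviation tube as a single set $\Omega$ and invokes the anti-extensive property of set opening to close $(\mathcal{Y}\ominus\Omega)\oplus\Omega\subseteq\mathcal{Y}$, whereas you spell out the induction and the inter-sample tightening more explicitly, but the logic is the same.
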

\begin{proof}
Using \textbf{Assumption \ref{assm_IC}} and \textbf{Lemmas 3-5}, Fig. \ref{fig_hier_feasibility} shows how feasibility is established for $\mathbf{C}_0$ and $\mathbf{C}_i$, $\forall i \in \mathcal{N}$. For notational convenience, let \begin{equation*}
\Omega \triangleq \Delta \mathcal{Y}(\boldsymbol{\delta}^*(k_0)) \oplus \Big(\mathcal{E}_0(\boldsymbol{\delta}^*(k_0)) \times K\mathcal{E}_0(\boldsymbol{\delta}^*(k_0))\Big).    
\end{equation*}
Since, $\mathbf{P}_0(\mathbf{x}(k))$ is feasible, the output trajectory $\hat{\mathbf{y}}^*(j)$ satisfies
\begin{equation*}
    \hat{\mathbf{y}}^*(j) \in \hat{\mathcal{Y}}_0^*(k_0) \triangleq \tilde{\mathcal{Y}}_0 \ominus \Omega \subseteq \mathcal{Y} \ominus \Omega, 
\end{equation*}
based on \eqref{Up_outputs}, \eqref{Up_tight_op_const}, and the fact that $\tilde{\mathcal{Y}}_0 \subseteq \mathcal{Y}$. Similarly, the feasibility of $\mathbf{P}_i(x_i(k))$ guarantees that the output trajectory $y_i^*(k|k)$ generated by $\mathbf{C}_i$ satisfies 
\begin{subequations}
    \begin{align}
        & y_i^*(k|k) - \hat{y}_i^*(k) \in \Delta \mathcal{Y}_i (\delta_i^*(k_0)), \label{Thm1_subsys_op_dev_satisfaction} \\
        & y_i(k) - y_i^*(k|k) \in \mathcal{E}_i^*(\delta_i^*(k_0)) \times K_i\mathcal{E}_i(\delta_i^*(k_0)), \label{Thm1_subsys_op_const_satisfaction}
    \end{align}
\end{subequations}
based on \eqref{Low_DeltaY_deviation} and \eqref{Low_IC}.
Thus, adding \eqref{Thm1_subsys_op_dev_satisfaction} and \eqref{Thm1_subsys_op_const_satisfaction} for each system results in $\mathbf{y}(k) \in \hat{\mathbf{y}}^*(k) \oplus \Omega$.
Since $\hat{\mathbf{y}}^*(k) \in \mathcal{Y} \ominus \Omega$, $\mathbf{y}(k) \in (\mathcal{Y} \ominus \Omega) \oplus \Omega$. 
Finally, using the anti-extensive property of the set opening operation, $\mathbf{y}(k) \in  \mathcal{Y}$. Note that satisfaction of the terminal constraint from \eqref{sys_Tcons} can be proven similarly. 
\end{proof}

\section{Conclusions}\label{sec_conclusions}
A two-level hierarchical MPC formulation was presented for linear systems of dynamically-coupled subsystems. Adjustable tubes are used to bound permissible deviations between the system trajectories planned by the upper- and lower-level controllers. 
A tube-based robust MPC formulation with simultaneous uncertainty set optimization and constraint tightening 
guaranteed constraint satisfaction to bounded disturbances between subsystem controllers. This new approach 
extends the applicability of hierarchical control algorithms to system operation focusing on the notion of completion, where constraint feasible equilibrium might not exist. This document presented proofs that were omitted in the original publication due to space constraints. 

\section*{APPENDIX}
This appendix provides the details necessary to implement the proposed hierarchical MPC controller.  
\vspace{4mm}
\\
\textbf{A.1 Inter-sample tightened output constraint set computation for $\mathbf{C}_0$}

\noindent The tightened output constraint set $ \tilde{\mathcal{Y}}_0 \subseteq \mathcal{Y}_0$ introduced in \eqref{Up_tight_op_const} is used to ensure that the coarse trajectories planned by $\mathbf{C}_0$ produce upsampled trajectories in \eqref{upsamp_trajs} and \eqref{nominal_output} that satisfy the original output constraints. While there are many ways to achieve this, the approach used in this paper is based on bounding the difference between the upsampled trajectories and the linear interpolations of the coarse trajectories planned by $\mathbf{C}_0$. Specifically, let $\hat{\mathbf{x}}(k_0)$ and $\hat{\mathbf{x}}(k_0+1)$ denote the first two states predicted by $\mathbf{C}_0$ corresponding to the first input $\hat{\mathbf{u}}(k_0)$, where $\hat{\mathbf{x}}(k_0 +1) = A^{\nu_0}\hat{\mathbf{x}}(k_0) + \sum_{j = 0}^{\nu_0-1} A^jB \hat{\mathbf{u}}(k_0)$. The linearly interpolated trajectories $\mathbf{x}_l(k+i)$, $\forall i \in [1, \nu_0-1]$, can be computed between $\hat{\mathbf{x}}(k_0)$ and $\hat{\mathbf{x}}(k_0 +1)$ as 
\begin{equation*}
    \mathbf{x}_l(k+i) = \hat{\mathbf{x}}(k_0) + \frac{i}{\nu_0} (\hat{\mathbf{x}}(k_0+1) - \hat{\mathbf{x}}(k_0)).
\end{equation*}
By the convexity of $ \mathcal{X} $, $\hat{\mathbf{x}}(k_0), \hat{\mathbf{x}}(k_0+1) \in \mathcal{X}$ implies $\mathbf{x}_l(k+i) \in \mathcal{X}$. However, it is not guaranteed that the upsampled trajectory satisfies  $\hat{\mathbf{x}}(k+i) \in \mathcal{X}$, $\forall i \in [1, \nu_0-1]$. Defined as $\mathbf{e}(k+i) = \hat{\mathbf{x}}(k+i) - \mathbf{x}_l(k+i)$, the difference between these trajectories can be computed as
\begin{equation}
    \mathbf{e}(k+i) = A_e(i)\hat{\mathbf{x}}(k) +B_e(i)\hat{\mathbf{u}}(k),
\end{equation}
where $A_e(i) = A^i - \frac{i}{\nu_0}A^{\nu_0} -(1 -\frac{i}{\nu_0})I_n$ and $B_e(i) = \sum_{j = 0}^{i-1}A^jB - \frac{i}{\nu_0}\sum_{j = 0}^{\nu_0-1}A^jB$. Since $\hat{\mathbf{x}}(k) \in \mathcal{X}$ and $\hat{\mathbf{u}}(k) \in \mathcal{U}$, these differences are bounded such that $\mathbf{e}(k+i) \in \mathcal{F}_i = A_e(i)\mathcal{X} \oplus B_e(i)\mathcal{U}$. Thus, defining $\mathcal{F} = \mathcal{F}_1 \cup \cdots \cup \mathcal{F}_{\nu_0-1} $ ensures $ \mathbf{e}(k+i) \in \mathcal{F} $, $\forall i \in [1,\nu_0-1]$. Finally, computing $\tilde{\mathcal{Y}} = \tilde{\mathcal{X}} \times \mathcal{U}$ where $\tilde{\mathcal{X}} = \mathcal{X} \ominus \mathcal{F} $ guarantees that the upsampled trajectories satisfy the original constraints. For ease of implementation, outer-approximating bounding boxes of $\mathcal{F}_i$ are computed and thus, $\mathcal{F}$ is also a bounding box. 
\vspace{4mm}
\\
\textbf{A.2 RPI set computation}

\noindent This section presents how the RPI set $\mathcal{E}_0(\boldsymbol{\delta}(k_0))$, used in \eqref{Up_IC}, is computed through the addition of linear constraints and cost function terms in the formulation of $\mathbf{P}_0(\mathbf{x}(k))$ using the approach from \cite{Raghuraman_2021_ACC}. Before proceeding with the computation of $\mathcal{E}_0(\boldsymbol{\delta}(k_0))$, consider the following definition. 
\begin{defn}\label{def_zono_scaled} \emph{\cite{Raghuraman_2021_ACC}}
The zonotope $\mathcal{Z}(\Phi) = \{G\Phi, c \} \subset \mathbb{R}^n$ is a scaled version of the nominal zonotope $\mathcal{Z} = \{G, c\}$ with the generator matrix $G$ scaled by a diagonal matrix $\Phi \in \mathbb{R}^{n_g \times n_g}$, $\Phi = \text{diag}(\phi_i)$, $\phi_i \geq 0$, $\forall i \in [1,n_g]$.
\end{defn}
\noindent Let the RPI set be a scaled zonotope such that  $\mathcal{E}_0(\boldsymbol{\delta}) = \{G_{\varepsilon} \Phi_{\varepsilon}, \mathbf{0} \}$ with an \emph{a priori} chosen nominal generator matrix $ G_{\varepsilon} \in \mathbb{R}^{n \times n_\varepsilon} $. The permissible state and input deviation sets are defined as scaled zonotopes with centers at the origin such that $\Delta \mathcal{Z} = \{G_{z} \textit{diag}(\boldsymbol{\delta}^z), \mathbf{0} \}$ and $\Delta \mathcal{V} = \{G_{v} \textit{diag}(\boldsymbol{\delta}^v), \mathbf{0} \}$, where $ G_z \in \mathbb{R}^{n \times n_z} $ and $ G_v \in \mathbb{R}^{m \times n_v} $. From \eqref{Up_dist_set}, the resulting disturbance error set is a scaled zonotope such that $\Delta \mathcal{W} = \{G_{w} \textit{diag}(\boldsymbol{\delta}), \mathbf{0} \}$, where $G_w = [ A_C G_{z} \; B_C G_{v} ] \in \mathbb{R}^{n \times n_w} $ and $ n_w = n_z + n_v $. 

\noindent Following the approach from \cite{Raghuraman_2021_ACC}, based on the one-step RPI computation from \cite{Raghuraman2019} and the zonotope containment conditions from \cite{Sadraddini2019}, the decision variables 
$ \Phi_{\varepsilon} \in \mathbb{R}^{n_\varepsilon \times n_\varepsilon} $, $ \Gamma_{\varepsilon,1} \in \mathbb{R}^{n_\varepsilon \times n_\varepsilon} $, and $ \Gamma_{\varepsilon,2} \in \mathbb{R}^{n_\varepsilon \times n_w} $ are added to $\mathbf{P}_0(\mathbf{x}(k))$ with linear constraints
\begin{subequations} \label{Dist_RPI_set_comp_all}
\begin{align}
    &(A+BK)G_{\varepsilon}\Phi_{\varepsilon}  = G_{\varepsilon} \Gamma_{\varepsilon,1},\\
    &G_{w}\textit{diag}(\boldsymbol{\delta}) = G_{\varepsilon} \Gamma_{\varepsilon,2},\\
    & |\Gamma_{\varepsilon,1}|\mathbf{1} + |\Gamma_{\varepsilon,2}|\mathbf{1} \leq \Phi_{\varepsilon}\mathbf{1}.
\end{align}
\end{subequations}
The cost function for $\mathbf{P}_0(\mathbf{x}(k))$ is modified to balance system performance with the maximization of $\boldsymbol{\delta}$ through the addition of the term $ \Lambda ||\bar{\boldsymbol{\delta}} - \boldsymbol{\delta}||_p $, where $ \Lambda $ is a scalar weighting term and $ \bar{\boldsymbol{\delta}} $ is a user-specified upper-bound on $ \boldsymbol{\delta} $. 
For the linear constraints \eqref{Dist_RPI_set_comp_all} to admit a feasible solution, the generator matrix $G_{\varepsilon}$ needs to be chosen carefully. As in \cite{Raghuraman2019,Raghuraman_2021_ACC}, an intuitive choice of generators is based on $G_w$ and $\bar{A}_K = \textit{diag}(A_{ii}+B_{ii}K_i)$
such that 
\begin{equation}\label{Up_RPI_gen_struct}
G_{\varepsilon} = [G_{w} \; \bar{A}_KG_{w} \; \cdots \; \bar{A}_K^{n_s}G_{w}],    
\end{equation}
where $n_s \in \mathbb{Z}_{+}$ is a parameter that can be increased to promote the feasibility of \eqref{Dist_RPI_set_comp_all} at the cost of set complexity and the number of decision variables. 

\noindent Note that the block-diagonal structure of $\bar{A}_K$ and structure of $G_w$ ensures that $G_{\varepsilon}$ from \eqref{Up_RPI_gen_struct} is separable and thus, $\mathcal{E}_0$ is a structured RPI set satisfying \eqref{Up_struct_RPI}. Once $\mathcal{E}_0(\boldsymbol{\delta}^*(k_0)) = \{G_{\varepsilon}\Phi_{\varepsilon}, c_{\varepsilon}\}$ is computed, the subsystem-level RPI sets $\mathcal{E}_i = \{G_{\varepsilon}^i, c_{\varepsilon}^i\}$, $ i \in \mathcal{N}$, satisfying \eqref{Up_struct_RPI} can be obtained by projection. 
\newpage
\noindent \textbf{A.3 Output Constraint Tightening for $\mathbf{C}_0$}

\noindent This section presents how the output constraint tightening used to compute $\hat{\mathcal{Y}}_0(\boldsymbol{\delta}(k_0))$, based on \eqref{Up_tight_op_const} and used in \eqref{Up_outputs}, is integrated into the formulation of $\mathbf{P}_0(\mathbf{x}(k))$ through the addition of linear constraints and cost function terms.

\noindent In \eqref{Up_tight_op_const}, let the inter-sample tightened output constraint set be a zonotope such that $\tilde{\mathcal{Y}}_0 = \{\tilde{G}_y, \tilde{c}_y \}$ with known generator matrix $ \tilde{G}_y \in \mathbb{R}^{(n+m) \times n_{\tilde{y}}} $ and center $ \tilde{c}_y \in \mathbb{R}^{n+m}$. Let the tightened output constraint set be a scaled zonotope such that $ \hat{\mathcal{Y}}_0(\boldsymbol{\delta}(k_0)) = \{\hat{G}_y \Phi_y, \hat{c}_y \} $, where $ \hat{G}_y \in \mathbb{R}^{(n+m) \times n_{\hat{y}}} $ is an \emph{a priori} chosen nominal generator matrix and $ \Phi_y $ is a diagonal scaling matrix satisfying $\Phi_y = \textit{diag}(\phi_y)$ with $\phi_{y,i} \geq 0$, $\forall i \in [1, n_{\hat{y}}]$.

\noindent Following the approach from \cite{Raghuraman_2021_ACC}, based on the one-step Pontryagin difference computation from \cite{Raghuraman2019} and the zonotope containment conditions from \cite{Sadraddini2019}, the decision variables $ \hat{c}_y \in \mathbb{R}^{n+m} $, $ \Phi_y \in \mathbb{R}^{n_{\hat{y}} \times n_{\hat{y}}} $, $ \Gamma_{y} \in \mathbb{R}^{n_{\tilde{y}} \times (n_{\hat{y}} + n_w + 2 n_\varepsilon)} $, and $ \beta_y \in \mathbb{R}^{n_{\tilde{y}}} $ are added to $\mathbf{P}_0(\mathbf{x}(k))$ with linear constraints

\begin{subequations} \label{Tight_opconsset_comp_all}
\begin{align}
    & \begin{bmatrix} \hat{G}_y\Phi_y \; \scriptsize  \begin{bmatrix} [ G_{z} \textit{diag}(\boldsymbol{\delta}^z) \; G_{\varepsilon}\Phi_{\varepsilon}] & \mathbf{0}
    \\ \mathbf{0} & [ G_{v} \textit{diag}(\boldsymbol{\delta}^v) \; KG_{\varepsilon}\Phi_{\varepsilon}] \end{bmatrix} 
\end{bmatrix} 
    = \tilde{G}_y\Gamma_y, \\
    &\tilde{c}_y - \hat{c}_y  = \tilde{G}_y\beta_y,\\
    &|\Gamma_y|\mathbf{1} + |\beta_y| \leq \mathbf{1}.
\end{align}
\end{subequations}
The cost function for $\mathbf{P}_0(\mathbf{x}(k))$ is modified to balance system performance with maximizing the size of $ \hat{\mathcal{Y}}_0(\boldsymbol{\delta}(k_0)) $ through the addition of the term $ -||\phi_y||_p $, where $ \phi_y $ is the vector of scaling variables along the diagonal of $ \Phi_y $. 
Note that the choice of $ \hat{G}_y $ affects the quality of the inner-approximation of the Pontryagin difference from \eqref{Up_tight_op_const}. 

\vspace{4mm}
\noindent \textbf{A.4 Terminal Constraint Tightening for $\mathbf{C}_0$}

\noindent The terminal constraint tightening used to compute $\hat{\mathcal{T}}_0(\boldsymbol{\delta}^*(k_0))$, based on \eqref{Up_tight_term_const} and used in \eqref{Up_terminal}, is integrated into the formulation of $\mathbf{P}_0(\mathbf{x}(k))$ through the addition of linear constraints and cost function terms following the same approach used in the previous section for output constraint tightening, and thus is not repeated here for brevity.

\vspace{4mm}
\noindent \textbf{A.5 Set containment condition ($\Delta \mathcal{Z} \subseteq \text{Pre} (\Delta \mathcal{Z})$)}

\noindent This section presents how the set containment 
$\Delta \mathcal{Z} \subseteq \text{Pre} (\Delta \mathcal{Z})$ in \eqref{Up_Pre} can be enforced using linear constraints. Assuming the invertibility of $ A_D $, the precursor set is a zonotope defined as $ \text{Pre} (\Delta \mathcal{Z}) = \{G_{p} \textit{diag}(\boldsymbol{\delta}), \mathbf{0} \} $, with $ G_{p} = [A_D^{-1} G_z \; -A_D^{-1} B_D G_v ] $. Using the zonotope containment conditions from \cite{Sadraddini2019}, the decision variable $ \Gamma_{p} \in \mathbb{R}^{n_w \times n_z} $ is added to $\mathbf{P}_0(\mathbf{x}(k))$ with linear constraints
\begin{subequations}
\begin{align}
    & G_z \textit{diag}(\boldsymbol{\delta}^z) = G_p \Gamma_{p}, \\
    & |\Gamma_p|\mathbf{1} \leq \textit{diag}(\boldsymbol{\delta}) \mathbf{1}.
\end{align}
\end{subequations}

\bibliography{library}
\bibliographystyle{unsrt}

\end{document}